\DeclareMathOperator{\EEE}{\mathbb{E}}
\DeclareMathOperator{\C}{\mathbb{C}}
\DeclareMathOperator{\aaa}{\mathbf{a}}
\DeclareMathOperator{\FF}{\mathcal{F}}
\DeclareMathOperator{\K}{\mathcal{K}}
\DeclareMathOperator{\I}{\mathbf{I}}
\DeclareMathOperator{\OO}{\mathcal{O}}
\DeclareMathOperator{\vv}{\mathbf{v}}
\DeclareMathOperator{\w}{\mathbf{w}}
\DeclareMathOperator{\z}{\mathbf{z}}
\DeclareMathOperator{\HH}{\mathbf{H}}
\DeclareMathOperator{\HHH}{\mathcal{H}}
\DeclareMathOperator{\RRR}{\mathbb{R}}
\DeclareMathOperator{\G}{\mathbf{G}}
\DeclareMathOperator{\LL}{\mathcal{L}}
\DeclareMathOperator{\SSS}{\mathcal{S}}
\DeclareMathOperator{\WWW}{\mathcal{W}}
\DeclareMathOperator{\CN}{\mathcal{CN}}
\DeclareMathOperator{\B}{\mathbf{B}}
\DeclareMathOperator{\BB}{\mathsf{B}}
\DeclareMathOperator{\MM}{\mathcal{M}}
\DeclareMathOperator{\e}{\mathbf{e}}
\DeclareMathOperator{\DD}{\mathcal{D}}
\DeclareMathOperator{\PPP}{\mathcal{P}}
\DeclareMathOperator{\rr}{\mathbf{r}}
\DeclareMathOperator{\x}{\mathbf{x}}
\DeclareMathOperator{\ttt}{\mathbf{t}}
\DeclareMathOperator{\y}{\mathbf{y}}
\DeclareMathOperator{\uu}{\mathbf{u}}
\DeclareMathOperator{\Q}{\mathbf{Q}}
\DeclareMathOperator{\cc}{\mathsf{C}}
\DeclareMathOperator{\g}{\mathbf{g}}
\DeclareMathOperator{\pphi}{\boldsymbol{\phi}}
\DeclareMathOperator{\THeta}{\boldsymbol{\theta}}
\DeclareMathOperator{\BETA}{\boldsymbol{\beta}}
\DeclareMathOperator{\SE}{\mathtt{SE}}
\newtheorem{proposition}{Proposition}
\def\BibTeX{{\rm B\kern-.05em{\sc i\kern-.025em b}\kern-.08em
   T\kern-.1667em\lower.7ex\hbox{E}\kern-.125emX}}
\begin{document}
\title{Joint User Association and Power Control for Cell-Free Massive MIMO}

\author{Chongzheng~Hao,~\IEEEmembership{Graduate Student Member,~IEEE,}
        Tung Thanh~Vu,~\IEEEmembership{Member,~IEEE,}
        \\Hien Quoc~Ngo,~\IEEEmembership{Senior Member,~IEEE,} Minh N.~Dao,
        Xiaoyu~Dang, Chenghua~Wang,~\IEEEmembership{Member,~IEEE,}
        and~Michail~Matthaiou,~\IEEEmembership{Fellow,~IEEE} 
\thanks{This paper was presented in part at the IEEE European Signal Processing Conference (EUSIPCO) \cite{PerryEusip23}. The work of X. Dang was supported partially by the National Natural Science Foundation of China under Grants 61971221 and 62031017. The work of C. Hao was supported by the China Scholarship Council under Grant 202106830001. The work of H. Q. Ngo was supported by the U.K. Research and Innovation Future Leaders Fellowships under Grant MR/X010635/1. This work is a contribution by Project REASON, a UK Government funded project under the Future Open Networks Research Challenge (FONRC)
sponsored by the Department of Science Innovation and Technology (DSIT).
The work of M. N. Dao was partially supported by the Australian Research Council (ARC), project number DP230101749, by the PHC FASIC program, project number 49763ZL, and by a public grant from the Fondation Math\'ematique Jacques Hadamard (FMJH).
The work of M. Matthaiou was supported by the European Research Council (ERC) under the European Union’s Horizon 2020
research and innovation programme (grant agreement No. 101001331)}
\thanks{Chongzheng Hao, Xiaoyu Dang, and Chenghua Wang are with the College of Electronic and Information Engineering, Nanjing University of Aeronautics and Astronautics, Nanjing, 211106, P. R. China (e-mail: chongzhhao@gmail.com; \{dang, chwang\}@nuaa.edu.cn).}
\thanks{Tung Thanh Vu is with the School of Engineering, Macquarie University, Australia (e-mail: thanhtung.vu@mq.edu.au).}
\thanks{Hien Quoc Ngo and Michail Matthaiou are with the Centre for Wireless Innovation (CWI), Queen’s University Belfast, BT3 9DT, U.K (e-mail: \{hien.ngo, m.matthaiou\}@qub.ac.uk).}
\thanks{Minh N. Dao is with the School of Science, RMIT University, Melbourne, VIC 3000, Australia (e-mail: minh.dao@rmit.edu.au).}
}

\maketitle

\begin{abstract}
This work proposes novel approaches that jointly design user equipment (UE) association and power control (PC) in a downlink user-centric cell-free massive multiple-input multiple-output (CFmMIMO) network, where each UE is only served by a set of access points (APs) for reducing the fronthaul signaling and computational complexity. In order to maximize the sum spectral efficiency (SE) of the UEs, we formulate a mixed-integer nonconvex optimization problem under constraints on the per-AP transmit power, quality-of-service rate requirements, maximum fronthaul signaling load, and maximum number of UEs served by each AP. In order to efficiently solve the formulated problem, we propose two different schemes according to different sizes of the CFmMIMO systems.
For small-scale CFmMIMO systems, we present a successive convex approximation (SCA) method to obtain a stationary solution and also develop a learning-based method (JointCFNet) to reduce the computational complexity.  
For large-scale CFmMIMO systems, we propose a low-complexity suboptimal algorithm using accelerated projected gradient (APG) techniques. Numerical results show that
our JointCFNet can yield similar performance and significantly decrease the run time compared with the SCA algorithm in  small-scale systems. 
The presented APG approach is confirmed to run much faster than the SCA algorithm in the large-scale system while obtaining a SE performance close to that of the SCA approach. Moreover, the median sum SE of the APG method is up to about $2.8$ fold higher than that of the heuristic baseline scheme. 
\end{abstract}

\begin{IEEEkeywords}
Cell-free massive MIMO, deep learning, large-scale systems, power control, small-scale systems, sum spectral efficiency, user association. 
\end{IEEEkeywords}

%
\IEEEpeerreviewmaketitle

\section{Introduction}
\IEEEPARstart{C}{ELL-FREE} massive multiple-input multiple-output (CFmMIMO) has been considered a promising solution for future generations of communication systems due to its potential to provide handover-free and uniformly good quality of service (QoS) to all users \cite{matthaiou2021road, zhang2020prospective}. In a CFmMIMO system, a large number of distributed access points (APs) jointly serve a large number of user equipments (UEs) within a given coverage area without cell boundaries in the same time and frequency resources. Each AP connects to the central processing unit (CPU) via a fronthaul link, and all CPUs are connected through a backhaul network \cite{HienTWC2017}. 
Since CFmMIMO avails of high macro-diversity gains, favorable propagation, and channel hardening, it can achieve ubiquitous coverage and substantial spectral and energy efficiencies with simplified resource allocation schemes 
\cite{emil20TWC}.

In the canonical CFmMIMO, all APs are assumed to serve all UEs jointly in the network 
\cite{HienTWC2017, emil20TWC}. However, the fronthaul capacity and computational complexity grow linearly with the number of UEs \cite{emil2020tcom}. A user-centric scheme was introduced in \cite{Buz2017wcl}, 
creating a serving AP set/cluster for each user; specifically, each AP works with different sets of APs serving different UEs while the serving AP set is determined by specific criteria, e.g., system performance \cite{liu2016tvt}, large-scale fading gain \cite{amm2019global}, or a two-stage process \cite{amm2022tcom}. In a nutshell, the distinction between canonical CFmMIMO and user-centric CFmMIMO is that the former defines that each user can be served by all the APs, while the latter assumes that a dynamic AP cluster serves each UE. Since the canonical CFmMIMO is practically unscalable, we concentrate on the user-centric CFmMIMO in this paper. 

Power control (PC) is indispensable for efficient allocation of the radio resources in wireless systems \cite{gje2008twc}. Hence, proper PC can mitigate inter-user interference and improve the system performance in CFmMIMO \cite{zhao2020Access}. Considering the scalability of practical architectures, each AP's fronthaul burden and computational complexity must be acceptable, and hence, the number of UEs served by each AP should be restricted by a smaller number than the total number of UEs $K$ \cite{Demir21, dan2020}. This means that each UE is served by a set of its associated APs instead of all the APs, and this procedure is called user association (UA) or AP selection. Moreover, joint PC and UA were investigated in 
\cite{Hien2018TGCN} to decrease the fronthaul power consumption and improve energy efficiency (EE). Note that the formulated joint problem is generally a mixed integer programming problem \cite{guen2022tcom}, and some iterative-based algorithms have been utilized to find suboptimal solutions in the related literature. However, such algorithms entail high computational complexity, especially in large-scale CFmMIMO systems with large numbers of APs and UEs. In summary, joint PC and UA can improve the performance of CFmMIMO systems, but the available optimization algorithms are time-consuming and might violate the real-time requirements. The main motivation of this paper is to achieve efficient joint PC and UA with high sum spectral efficiency (SE).

\subsection{Review of Related Literature}
\subsubsection{Power Control}
In \cite{HienTWC2017}, the authors studied the max-min PC problem in the downlink of CFmMIMO systems, and a network-wide bisection search-based algorithm was proposed to solve a sequence of convex feasibility problems in each step. The fractional PC policy for the uplink of CFmMIMO networks was developed in \cite{Nik2019icc}. This scheme depends only on large-scale quantities and can be fully distributed and managed. Inspired by weighted minimum mean square error (WMMSE) minimization and fractional programming (FP), a new FP-based algorithm was presented for max-min fair power allocation in \cite{cha2021OpenJ}. Since the computational complexity of the above algorithms grows polynomially with the scale of the system, deep neural network (DNN) based methods were introduced in \cite{Sal2021globalcom, Zaher2023TWC, Lou2022globecom} to solve the optimization problems but with low complexity. In \cite{Sal2021globalcom}, the authors designed a convolutional neural network (CNN) model to approximate the second-order cone program (SOCP) algorithm for the downlink PC with maximum ratio transmission (MRT). A per-AP deployed fully distributed DNN model and a cluster-based centralized DNN model were proposed in \cite{Zaher2023TWC}. The large-scale fading (LSF) coefficients were used as input for the constructed centralized and distributed models, and the model outputs the near-optimal PC coefficients. In \cite{Lou2022globecom}, a graph neural network (GNN) was developed to mimic the SOCP algorithm for downlink max-min PC with MRT beamforming.

\subsubsection{UE Association}
In this space, \cite{buz2020twc} proposed a Frobenius-based user selection strategy by computing the average Frobenius norm between each AP and UE. Each AP selected the served users with larger Frobenius norm and then derived sum-rate and minimum-rate maximization expressions for the uplink and downlink. Moreover, \cite{gar2020Access} used a goodness-of-fit test to design the dynamic AP mode switch strategy and to improve the EE performance for cell-free millimeter wave massive MIMO networks. A dynamic AP turned ON/OFF strategy was developed in  \cite{fem2020access} to maximize the EE for green CFmMIMO networks. 
Furthermore, some DNN-based methods have been studied in the literature \cite{men2021, chang2021vtc, ghi2023tgcn}. The recent work in \cite{men2021} presented a deep reinforcement learning (DRL) scheme to implement AP selection with the reward function based on the quality of service and power consumption. Maximizing the EE by joint cooperation clustering and content caching with perfect instantaneous channel state information (CSI) using the DRL method, was investigated in \cite{chang2021vtc}. In \cite{ghi2023tgcn}, the authors proposed a DNN model to solve the AP selection problem with EE maximization under training error, pilot contamination, and imperfect CSI.

\subsubsection{Joint Power Control and UE Association}
Joint PC and AP selection for the downlink CFmMIMO network were studied in \cite{Hien2018TGCN}. In \cite{hien2018asi}, the authors investigated the joint AP selection and PC optimization problem in Ricean channels to maximize the smallest SE of all UEs and leveraged the bisection algorithm to solve this problem. The proposed scheme can reduce the fronthaul power consumption and significantly improve the total EE with large numbers of APs. In \cite{vu2020icc}, the authors considered the practical EE performance of a CFmMIMO system, and a joint power allocation and AP selection method was developed to minimize the energy consumption subject to some QoS constraints. 

\subsection{Research Gap and Main Contributions}
Although some iterative-based and DNN-based schemes have been proposed in the literature \cite{HienTWC2017, Nik2019icc, cha2021OpenJ, Sal2021globalcom, Zaher2023TWC, Lou2022globecom, buz2020twc, gar2020Access, fem2020access, men2021, chang2021vtc, ghi2023tgcn}, these works mainly focus on either PC or UA. The drawback of \cite{HienTWC2017, Nik2019icc, cha2021OpenJ, Sal2021globalcom, Zaher2023TWC, Lou2022globecom} is that they have only focused on PC problems, while \cite{buz2020twc, gar2020Access, fem2020access, men2021, chang2021vtc, ghi2023tgcn} proposed various methods for UA alone. However, joint PC and UA are essential for deploying CFmMIMO systems. On the other hand, \cite{Hien2018TGCN, hien2018asi, vu2020icc}, explored the joint PC and UA issue to improve the performance for uplink or downlink in CFmMIMO systems. Although these works have considered the joint problem, they assumed perfect fronthaul links, hence, they ignored the limited capacity constraints in practical scenarios.  

Moreover, developing efficient algorithms for joint PC and UE association is a challenging task that needs careful consideration. The iterative-based algorithms for joint PC and UA in \cite{Hien2018TGCN, hien2018asi, vu2020icc}, require substantial computational resources, making them less likely to be implemented in large-scale CFmMIMO networks. Supervised learning-based DNN models are widely used to approximate iterative algorithms and to reduce the computational complexity significantly 
\cite{Sal2021globalcom, Zaher2023TWC, Raj2021ICC}. However, this method is not available for large-scale systems since it requires enormous time resources to generate a sufficient training dataset. For example, we spent around $24$ hours generating $28,000$ training examples using an Intel (R) i$7-9800$X CPU under the parallel mode. It should be mentioned that our designed model (in Section III-B) is simple and requires less training data. In \cite{dan2019camsap}, $2\times10^6$ training examples are needed to train the developed model. Furthermore, unsupervised learning models can obtain solutions without prepared training datasets \cite{men2021, chang2021vtc, ghi2023tgcn}. However, training these models is challenging since the performance largely depends on finding suitable parameters \cite{Mostafa2022wcnc}. In a nutshell, to obtain the solutions for a joint UA and PC problem, the iterative-based algorithms are unscalable for large-scale CFmMIMO systems, while DNN-based models rely on generating massive training examples and designing efficient training algorithms.

In order to fill this gap, we hereafter consider the joint UA and PC in a downlink CFmMIMO system with limited-capacity fronthaul links and local partial protective zero-forcing (PPZF) processing. Note that the local PPZF processing has been verified to offer superior SE compared to alternative processing techniques \cite{Giovanni20TWC}. Then, we derive a sum SE expression and propose two different schemes for the formulated optimization problem according to different scales of the user-centric CFmMIMO network. The main contributions of this work are summarized as follows:
\begin{itemize}
    \item We formulate a mixed-integer nonconvex problem for maximizing the achievable sum SE of the system with local PPZF processing. The problem is subject to a maximum fronthaul signaling load and a maximum number of UEs served at each AP for reducing the computational complexity, minimum rates required to guarantee service quality, and per-AP transmit power constraints. To solve this formulated problem, we propose a successive convex approximation (SCA) algorithm with convergence to the stationary solution.
    \item Considering real-time requirements, we design a DNN-based JointCFNet model to approximate the SCA algorithm in small-scale CFmMIMO systems. Unlike the widely used single variable optimization DNN model, our designed JointCFNet can easily be extended to multiple optimization problems. Also, the JointCFNet can obtain nearly identical performance with the SCA method and can substantially reduce the computational complexity.
    \item We propose an alternative approach to solve the formulated problem using accelerated projected gradient (APG) techniques for large-scale CFmMIMO systems. The APG approach has much lower complexity than the SCA approach, providing acceptable performance.
    \item Numerical results confirm that in small-scale CFmMIMO systems, the JointCFNet can provide high approximation accuracy while the run time reduces by three orders of magnitude compared with the SCA algorithm. The low-complexity APG algorithm offers a SE performance close to that of the SCA algorithm while performing significantly faster than the SCA algorithm in large-scale CFmMIMO scenarios. Moreover, the APG can offer significantly higher SE than the heuristic baseline scheme. 
\end{itemize}

\subsection{Paper Organization and Notation}
The remainder of this paper is organized as follows: Section \ref{Sec:systemmode} introduces our system model and formulates the joint PC and UA problem for maximum sum SE. We derive the SCA algorithm and design the DNN-based JointCFNet for small-scale CFmMIMO systems in Section \ref{Sec:small-scale}. The proposed low-complexity APG solution for large-scale CFmMIMO networks is presented in Section \ref{Sec:large-scale}. Numerical results are shown in Section \ref{Sec:simulation}. Section \ref{Sec:conclusion} concludes the paper.

\textit{Notation:} Matrices and vectors are denoted by bold upper letters and lower case letters, respectively; $\I_{N}$ denotes the $N \times N$ identify matrix, $\mathsf{A} \in \RRR^{a \times b \times c}$ denotes a tensor, while $\mathbb{E}\{\cdot\}$ denotes the statistical expectation. The superscripts $(\cdot)^*$, $(\cdot)^{\text{T}}$ and $(\cdot)^{\text{H}}$ stand for the conjugate, transpose, and conjugate-transpose, respectively. A circular symmetric complex Gaussian matrix having covariance $\Q$ is denoted by $\mathcal{CN}(0,\Q)$; $\mathbb{C}$ and $\mathbb{R}$ denote the complex, and real field, respectively. Table \ref{Tab:Acronyms} and \ref{Tab:Notations} lists the main acronyms and  notations, respectively.

\begin{table}[!t]\small
\renewcommand{\arraystretch}{1.0}
\caption{List of Acronyms
}
\label{Tab:Acronyms}
\begin{center}
\vspace{-4mm}
\begin{tabular}{|c|c|}
\hline
Acronyms & Name \\
\hline
APG & Accelerated projected gradient \\
\hline
APs & Access points \\
\hline
CFmMIMO & Cell-free massive multiple-input multiple-output \\
\hline
CNN & Convolutional neural network \\
\hline
CSI & Channel state information \\
\hline
DL & Deep learning \\
\hline
DNN & Deep neural network \\
\hline
EE & Energy efficiency \\
\hline
FULL & Full user equipment association \\
\hline
FZF & Full-pilot zero-forcing \\
\hline
HEU & Heuristic \\
\hline
LSF & Large-scale fading \\
\hline
MMSE & Minimum mean square error \\
\hline
MRT & Maximum ratio transmission \\
\hline
PC & Power control \\
\hline
PMRT & Protective maximum ratio transmission \\
\hline
PPZF & Partial protective zero-forcing \\
\hline
PZF & Partial zero-forcing \\
\hline
QoS & Quality of service \\
\hline
SCA & Successive convex approximation \\
\hline
SE & Spectral efficiency \\
\hline
TDD & Time-division duplexing \\
\hline
UA & User association \\
\hline
UEs & User equipments \\
\hline
\end{tabular}
\end{center}
\vspace{-2mm}
\end{table}

\begin{table}[!t]\small
\renewcommand{\arraystretch}{1.0}
\caption{List of Main Notations and Definitions
}
\label{Tab:Notations}
\begin{center}
\vspace{-4mm}
\begin{tabular}{|c|c|}
\hline
Notation & Definition \\
\hline
$M$ & Number of APs \\
\hline
$K$ & Number of UEs \\
\hline
{$N$} & Number of antennas at each AP \\
\hline
$\rho_p$ & Normalized pilot power at each AP \\
\hline
$\rho_d$ & Max. normalized transmit power at each AP \\
\hline
${\g}_{mk}$ & Channel vector between AP $m$ and UE $k$ \\
\hline
$\bm{\phi}_{k}$ & Pilot sequence of UE $k$ \\
\hline
{$\tilde {\g}_{mk}$} & Small-scale fading between AP $m$ and UE $k$ \\
\hline
{$\hat {\g}_{mk}$} & {Estimated channel between AP $m$ and UE $k$} \\
\hline
${\beta}_{mk}$ & Large-scale fading between AP $m$ and UE $k$ \\
\hline
$\tau_{p}$ & Length of pilot sequence \\
\hline
$\tau_{c}$ & Length of coherence block \\
\hline
$\SSS_{m}$ & Subset of strong UEs \\
\hline
$\WWW_{m}$ & Subset of weak UEs \\
\hline
$\uu_{mk}^{\text{PZF}}$ & Precoding vector with PZF \\
\hline
$\uu_{ml}^{\text{PMRT}}$ & Precoding vector with PMRT \\
\hline
$\theta_{mk}$ & Power control factor of AP $m$ and UE $k$\\
\hline
{$s_k$, $s_{\ell}$} & {Transmitted data symbol of strong and weak UEs}\\
\hline
$a_{mk}$ & Association coefficient of UE $k$ and AP $m$ \\
\hline
$\SE_k$ & Downlink SE for UE $k$ \\
\hline
$\SE_{\text{QoS}}$ & Threshold SE of each UE \\
\hline
$C_{\max}$ & Max. fronthaul load of each AP\\
\hline
$\widehat{K}_m$ & Max. number of UEs served by each AP $m$ \\
\hline
\end{tabular}
\end{center}
\vspace{-2mm}
\end{table}

\section{System Model And Problem Formulation} \label{Sec:systemmode}
We consider a CFmMIMO system, where $M$ APs simultaneously serve $K$ single-antenna UEs in the same frequency band using time-division duplexing (TDD) \cite{HienTWC2017}. Each AP is equipped with $N$ antennas and randomly distributed in a large geographical area, connected to the CPUs via the fronthaul links (see Fig~\ref{fig:Systemmodel}). We assume a block-fading channel model where the channel response can be approximated as constant in a coherence interval and varies independently between different intervals \cite{EmilMIMO}. In TDD operation mode, each coherence block includes three main phases: uplink training, uplink payload data, and downlink payload data transmission. Here, we focus on the downlink, and hence, uplink payload data transmission is neglected.
In the uplink training phase, the UEs send their pilot signals to the APs, and each AP estimates the channels to the corresponding users. The acquired channel estimates are used to precode the transmit signals in the downlink.
We assume that the fronthaul links can offer error-free transmission, yet the fronthaul capacity is limited. Moreover, the system scale is related to the number of APs $M$ and UEs $K$ in the CFmMIMO systems \cite{farooq21TCOM}. In this work, we adopt the definition of system scale as $M \times K$ from \cite{Mai2022TWC}. Referring to \cite{Sal2021globalcom, Raj2021ICC} and \cite{Mai2022TWC}, we assume that 
$MK<1000$ in a small-scale CFmMIMO system, while for a large-scale CFmMIMO system, 
$MK \geq 1000$.
\begin{figure}[t]
	\centering
	\vspace{0em}
    \includegraphics[width=80mm]{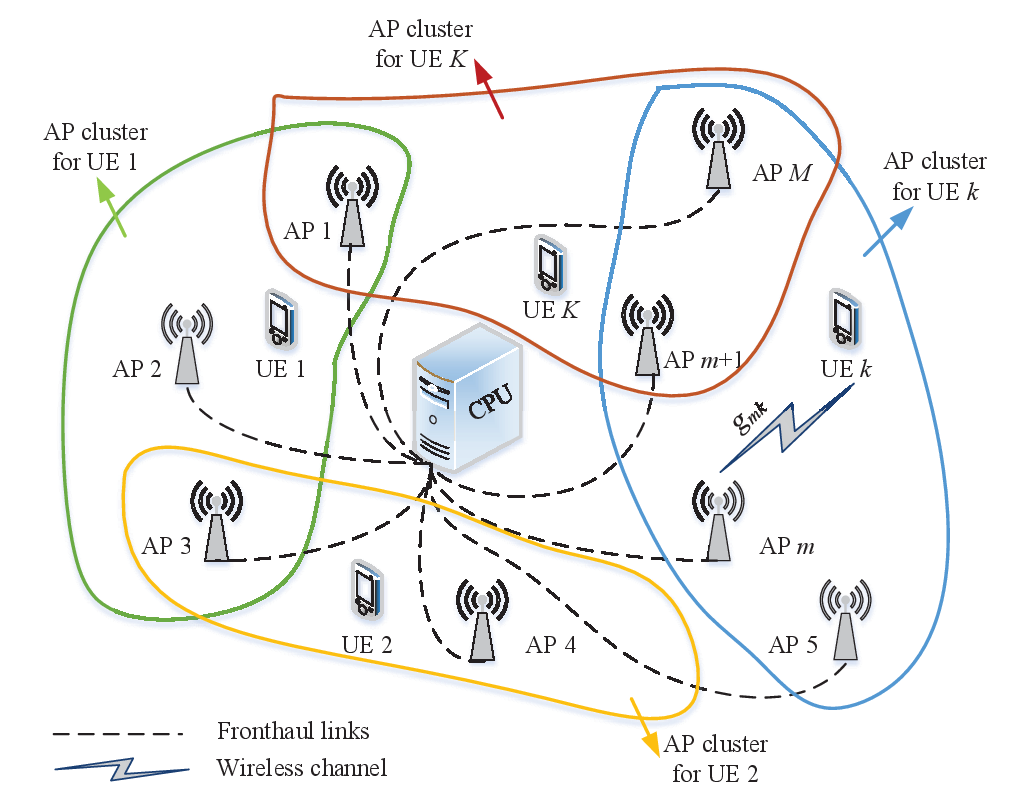}
	\vspace{0em}
	\caption{Architecture of the CFmMIMO network. The distributed APs connect to the CPU via fronthaul links, while the APs may jointly serve UEs.}
	\label{fig:Systemmodel}
\end{figure}

\subsection{Uplink Channel Estimation}
In each coherence block of length $\tau_c$ symbols, all the UEs simultaneously send their pilots of length $\tau_p$ symbols to the APs. We assume that the pilots are pairwisely orthogonal, which requires $\tau_p \geq K$, and ${{\pphi_k^{\text{H}}{\pphi}_k}} = \tau _p$, ${{\pphi_k^{\text{H}}{\pphi}_{\ell}}} = 0, \forall \ell \neq k$. Let $\sqrt {{\rho _p}} \bm{{\phi}} _k \in {\mathbb{C}^{{\tau _p} \times 1}}$ be the pilot sequence sent by the $k$-th user, where ${\rho _p}$ is the normalized pilot power at each user and we assume that the UEs transmit pilot with full power. The received pilot at the $m$-th AP is
\begin{align}
    \label{receivedPilot}
    {\mathbf{Y}}_m^p = \sum\limits_{k = 1}^K {\g}_{mk} \sqrt {{\rho _p}} \bm{\phi} _k^{\text{H}} + {\mathbf{N}}_m^p,
\end{align}
where ${\mathbf{N}}_m^p \in {\mathbb{C}^{N \times {\tau _p}}}$ is the circularly symmetric complex white Gaussian noise matrix, whose elements are independent and identically distributed $\mathcal{C}\mathcal{N}\left( {0,1} \right)$; ${\g}_{mk} \in {\mathbb{C}^{N \times 1}}$ is the channel between the $m$-th AP and the $k$-th UE modeled as
\begin{align}
    \label{ChannelModel}
    {\g}_{mk} = {\left( {{\beta _{mk}}} \right)^{1/2}}\tilde {\g}_{mk},
\end{align}
where ${\beta_{mk}}$ represents the LSF coefficients between AP $m$ and UE $k$, with $m \in \mathcal{M} \triangleq \left\{ {1, \ldots ,M} \right\}$, and $k \in \mathcal{K} \triangleq \left\{ {1, \ldots K} \right\}$, and  $\tilde {\g}_{mk} \in {\mathbb{C}^{N \times 1}}$ denotes the small-scale fading, distributed as $\tilde {\g}_{mk} \sim \mathcal{C}\mathcal{N}\left( {{\mathbf{0}},{{\mathbf{I}}_N}} \right)$.

At AP $m$, ${\g}_{mk}$ is estimated using the received pilot signals and the minimum mean-square error (MMSE) estimation technique. By following \cite[Corollary B.18]{EmilMIMO}, \cite{kay97}, the projection of ${\mathbf{Y}}_m^p$ onto ${\bm{\phi}_k}$ is executed, then using the MMSE yields
\begin{align}
    \label{MMSEEstimation}
    \hat {\g}_{mk} = {c_{mk}}{\mathbf{Y}}_m^p {\pphi} _k,
\end{align}
where ${c_{mk}}$ is defined as
\begin{align}
    \label{MMSECoefficent}
    {c_{mk}} \triangleq \frac{{\sqrt {{\rho _p}} {\beta _{mk}}}}{{{\tau _p}{\rho _p}{\beta _{mk}} + 1}}.
\end{align}
Let $\mathbf{y}_{mk} = {\mathbf{Y}}_m^p {\pphi} _k = \sum\limits_{k = 1}^K {\mathbf{g}}_{mk}\sqrt{{\rho_p}} \tau _p + {{\mathbf{n}}_m^p}'$, where $\mathbf{y}_{mk}\in {\mathbb{C}^{N \times {1}}}$ and ${{\mathbf{n}}_m^p}' = {\mathbf{N}}_m^p\pphi_k \sim \mathcal{C}\mathcal{N}\left( {{\mathbf{0}},{{\tau _p\mathbf{I}}_N}} \right)$. By assuming an independent Rayleigh fading channel and that the noise elements are statistically independent,
we have that ${[\mathbf{y}_{mk}]_{n}} \sim \mathcal{C}\mathcal{N}\left(0, \tau_p\left(\tau_p\rho_p\beta_{mk} + 1 \right)\right)$.
Thus, the estimated channel $\hat {\g}_{mk}$ is distributed according to $\mathcal{C}\mathcal{N}\left( {{\mathbf{0}},\sigma _{mk}^2{{\mathbf{I}}_N}} \right)$, where $\sigma_{mk}^2$ is the mean-square of the estimate corresponds to any $n$-th antenna element is denoted by
\begin{align}
    \label{EstimateMeaNSquare}
    \begin{split}
    \sigma _{mk}^2 & \triangleq \mathbb{E}\left\{ {{{\left| \left[ \hat {\g}_{mk} \right]_n \right|}^2}} \right\} = {c_{mk}^{2} \mathbb{E}\left\{{{{\left| \left[\mathbf{y}_{mk} \right]_n \right|}^2}} \right\}} \\ & = \frac{{{\tau _p}{\rho _p}\beta _{mk}^2}}{{{\tau _p}{\rho _p}\beta _{mk} + 1}}.
    \end{split}
\end{align}
After that, let $\hat{\G}_m \triangleq [\hat{\g}_{m1},\dots,\hat{\g}_{mK}]$ be the estimate of the channel matrix between all the UEs and AP $m$.

\subsection{Downlink Data Transmission with PPZF Precoding}
After acquiring the channels through the uplink pilot, each AP treats the estimated channel as the actual channel. In the remaining $(\tau_{c} - \tau_{p})$ symbols of each coherence block, the APs first implement precoding and then transmit signals to UEs. In the following part, we use the local PPZF precoding because \textit{(i)} it provides higher SE than the other precoding methods, such as MRT, local full-pilot zero-forcing (FZF), and local partial zero-forcing (PZF) \cite{Giovanni20TWC}; \textit{(ii)} it can be implemented in a distributed fashion with local CSI at the APs. More specifically, in this work, the downlink transmission includes two steps: (S1) precoding design at each AP for all the UEs and (S2) joint UE association and PC for optimizing the SE of the system.

\subsubsection{Step (S1)}
Hereafter, we denote by \textit{strong} UEs the UEs that have the largest channel gains, while we denote by \textit{weak} UEs the UEs that have the smallest channel gains. The key idea of the PPZF technique is that each AP suppresses only the interference of the \textit{strongest} UEs while tolerating the interference of the \textit{weakest} UEs. To this end, AP $m$ with PPZF first categorizes the subsets of strong and weak UEs as $\SSS_m\subset\K$ and $\WWW_m\subset\K$, respectively. Here, $\SSS_m \cap \WWW_m = \varnothing$ and $|\SSS_m| + |\WWW_m| = K$, where $|\SSS|$ is the cardinality of set $\SSS$. The strategy for choosing these subsets is discussed later in Section~\ref{Sec:simulation}. Then, each AP $m$ uses the local PZF technique to precode signals for UEs in $\SSS_m$, and uses a protective MRT (PMRT) technique to serve the UEs in $\WWW_m$. The signal transmitted by AP $m$ is given by 
\begin{align}
    \label{APTransmittedSignal}
    \!\x_m^{\text{PPZF}} \!=\! \sum_{k\in\SSS_m}\! \sqrt{\rho_d } \theta_{mk} \uu_{mk}^{\text{PZF}} s_k + \sum_{\ell\in\WWW_m}\! \sqrt{\rho_d } \theta_{m\ell} \uu_{m{\ell}}^{\text{PMRT}} s_{\ell},
\end{align}
where ${s_i},i \in \left\{ {k,\ell} \right\}$ is the data symbol, which satisfies $\EEE \{ \left| {{s_i}} \right|^2 \} = 1$, $\uu_{mk}^{\text{PZF}} \in \C^{N\times 1}$ and $\uu_{mk}^{\text{PMRT}} \in \C^{N\times 1}$ are the precoding vectors with $\EEE\{\|\uu_{mk}^{\text{PZF}}\|^2\} = \EEE\{\|\uu_{mk}^{\text{PMRT}}\|^2\} = 1$, while $\rho_d$ is the maximum normalized transmit power at each AP and 
\begin{align}
    \label{thetamk}
    \theta_{mk} \geq 0, \forall m,k,
\end{align}
are PC coefficients. The transmitted power at AP $m$ is constrained by $\EEE\{|\x_{m}^{\text{PPZF}}|^2\}\leq \rho_d$ which is equivalent to
\begin{align}
    \label{sumtheta}
    \sum_{k\in\K}\theta_{mk}^2 \leq 1, \forall m.
\end{align}

Let $\hat{\G}_{\SSS_m} \in \C^{N \times |\SSS_m|}$ be the matrix formed by stacking the estimated channels of all UEs in $\SSS_m$ of AP $m$. Denote by $j_{mk}\in\{1, \dots, |\SSS_m|\}$ the index of UE $k$ in $\SSS_m$. Then, we have $\hat{\G}_{\SSS_m}\e_{j_{mk}}=\hat{\g}_{mk}$, where $\e_{j_{mk}}$ is the $j_{mk}$-th column of $\I_{|\SSS_m|}$. We note that the set $\SSS_m$ is independent of the UA. The precoding vector $\uu_{mk}^{\text{PZF}}$ can be expressed as 
\begin{align}
    \label{PZFvector}
    \uu_{mk}^{\text{PZF}} = \sqrt{\sigma_{mk}^2 \Big( N-|\SSS_m| \Big)} \hat{\G}_{\SSS_m} \Big( \hat{\G}_{\SSS_m}^\text{H} \hat{\G}_{\SSS_m} \Big)^{-1} \e_{j_{mk}},
\end{align}
where $N > |\SSS_m|$. 

To fully protect the strongest UEs in $\SSS_m$ from the interference from the weakest UEs in $\WWW_m$, the PPZF technique forces the MRT precoding of the signals of UEs in $\WWW_m$ to take place in the orthogonal complement of $\hat{\G}_{\SSS_m}$. Let 
\begin{align}
    \label{PPZFProjetcionMatrix}
    \B_m = \I_M - \hat{\G}_{\SSS_m} \Big( \hat{\G}_{\SSS_m}^\text{H} \hat{\G}_{\SSS_m} \Big) ^{-1} \hat{\G}_{\SSS_m}^\text{H},
\end{align}
be the projection matrix onto the orthogonal complement of $\hat{\G}_{\SSS_m}$. Then, the PMRT  precoding vector for UE $\ell$ at AP $m$ is given by 
\begin{align}
    \label{PMRTvector}
    \uu_{m{\ell}}^{\text{PMRT}} = 
    \frac{\B_m \bar{\HH}_m \e_{\ell}}{\sqrt{\sigma_{m\ell}^2 \Big( N-|\SSS_m| \Big) }},
\end{align}
where $\bar {\HH}_m$ is the full-rank channel estimates matrix of $m$-th AP, $\e_{\ell}$ is the $\ell$-th column of $\I_M$, and $\hat{\g}_{mk}^\text{H} \B_m = \mathbf{0}^\text{T}$ if $k\in \SSS_m$.

At UE $k$, the received signal is
\begin{align} \label{UE-ReceivedSignal}
\begin{split}
    y_k =  \sqrt{\rho_d } & \sum_{m\in\MM} \sum_{k\in\SSS_m}\!  \theta_{mk} \g_{mk}^\text{H} \uu_{mk}^{\text{PZF}} s_k \\ 
           & + \sqrt{\rho_d } \sum_{m\in\MM} \sum_{\ell\in\WWW_m}\!  \theta_{m\ell} \g_{mk}^\text{H}
        \uu_{m{\ell}}^{\text{PMRT}}s_{\ell} + n_k,
\end{split}    
\end{align}
where $n_k \sim \CN(0,1)$ is the additive Gaussian noise. 
Following \cite{Giovanni20TWC}, the downlink achievable SE for UE $k$ is expressed as
\begin{equation}\label{DownlinkSE}
    \SE_k \left( \THeta  \right) = \frac{{{\tau _c} - {\tau _p}}}{{{\tau _c}}}{\log _2} \Big( {1 + \mathrm {SINR_k} } \Big),
\end{equation}
where
\begin{align}
    \label{DownlinkSINR}
    \mathrm {SINR_k} \!=\! \frac{{{{\left( {{U_k}\left( \THeta  \right)} \right)}^2}}}{{{V_k}\left( \THeta  \right)}} \!=\! \frac{{{{\left( {\sum\limits_{m \in \mathcal{M}} {\sqrt {{\rho _d}\left( {N - \left| {{\mathcal{S}_m}} \right|\sigma _{mk}^2} \right)} {\theta _{mk}}} } \right)}^2}}}{{\sum\limits_{l \in \mathcal{K}} {\sum\limits_{m \in \mathcal{M}} {{\rho _d}\theta _{ml}^2\Big( {{\beta _{mk}} - {\delta _{mk}}\sigma _{mk}^2} \Big) \!+\! 1} } }},
\end{align}
is the effective signal-to-interference-and-noise ratio (SINR), $\THeta \triangleq [\THeta_{1}^\mathrm{T},\dots,\THeta_{M}^\mathrm{T}] ^\mathrm{T}, \THeta_{m} \triangleq \left[ \theta_{m1}, \dots, \theta_{mK} \right] ^\mathrm{T}$, $\delta_{mk} = 1 $ if AP $m$ uses PZF for UE $k\in \SSS_m$, and $\delta_{mk} = 0$ if AP $m$ uses PMRT for UE $k\in \WWW_m$.

\subsubsection{Step (S2)}
Although most works assume that the fronthaul links offer infinite capacity \cite{HienTWC2017}, \cite{Hien2018TGCN}, \cite{cha2021OpenJ}, \cite{Giovanni20TWC}, in practical CFmMIMO systems, the fronthaul capacity is subject to some certain limitations \cite{Bashar2019TCom}. Thus, the fronthaul signal load to each AP should be limited and the number of UEs served by each AP should be restricted by a number smaller than $K$ to reduce the complexity of each AP \cite{Demir21}. This also means that each UE is only served by a set of its associated APs instead of all the APs. To this end, we define the association of UE $k$ and AP $m$ as
\begin{align}
    \label{a}
    a_{mk} \triangleq
    \begin{cases}
      1, & \text{if UE $k$ associates with AP $m$}\\
      0, & \mbox{otherwise}
    \end{cases}, \forall m,k.
\end{align}
We have
\begin{align}
\label{thetaa:relation}
    \Big( \theta_{mk}^2 = 0, \forall k,\,\, \text{if}\,\, a_{mk}= 0 \Big),\quad \forall m,
\end{align}
to guarantee that if AP $m$ does not associate with UE $k$, the transmit power $\rho_d \theta_{mk} ^2$ towards  UE $k$ is zero. 
Note that since Steps (S1) and (S2) are performed separately, the UE association in this step does not affect the precoding signals in Step (S1) and the mathematical structure of $\SE_k$ in \eqref{DownlinkSE}. The UE association variable $a_{mk}$ only affects $\SE_k$ via power $\theta_{mk}^2$ and \eqref{thetaa:relation}. Here, we have
\begin{align}
\label{SEQoS}
    &\SE_k\left( \THeta  \right) \geq \SE_{{\text{QoS}}},\forall k,
    \\
    \label{fronthaul:cons}
    & \sum_{k\in\K} a_{mk} \SE_k (\THeta) \leq C_{\max}, \forall m,
    \\
    \label{Khat}
    & \sum_{k\in\K} a_{mk} \leq \widehat{K}_m, \forall m,
    \\
    \label{atleastoneAP:cons}
    & \sum_{m\in\MM} a_{mk} \geq 1, \forall k,
\end{align}
to guarantee that: the SE of each UE needs to be larger than a threshold ${\text{S}}{{\text{E}}_{{\text{QoS}}}}$, the fronthaul load at each AP $m$ is below a threshold $C_{\max}$, the maximum number of UEs served by each AP $m$ is $\widehat{K}_m$, and that each UE is served by at least one AP.

\subsection{Problem Formulation}
In this subsection, we aim at optimizing UA coefficients $\aaa$, where $\aaa \triangleq \{a_{mk}\}, \forall m,k$, and the PC coefficients $\THeta$ to maximize the sum SE. Specifically, we formulate an optimization problem as follows:
\begin{align}\label{P:SE}
    \underset{\aaa,\THeta}{\max}\,\, &
     \sum_{k\in\K} \SE_k(\THeta)\\
    \mathrm{s.t.} \,\,
    \nonumber
    & \eqref{thetamk}, \eqref{sumtheta}, \eqref{a}-\eqref{atleastoneAP:cons},
\end{align}
where $\mathtt{SE}_k(\THeta)$ is the downlink achievable SE for UE $k$, which has been defined in \eqref{DownlinkSE}.

It should be mentioned that our objective \eqref{P:SE} contains some practical constraints of the CFmMIMO system. We consider the limited fronthaul capacity, the per-user QoS requirement, and only statistical CSI knowledge on the CPUs. The problem (21) is a mixed-integer nonconvex optimization problem due to the nonconvex objective function, nonconvex constraints, and binary variables involved. Moreover, there is a tight coupling between the continuous parameters $\THeta$ and binary variables $\aaa$, which makes problem (21) even more complicated. Hence, finding the globally optimal solution to the problem (21) is very challenging.
To efficiently solve the above joint UA and PC problem, we propose different methods depending on the different sizes of CFmMIMO networks. For small-scale CFmMIMO systems, we first propose an SCA-based algorithm and then a deep learning (DL) scheme to reduce the computational complexity, as well as satisfy the real-time requirements in practical deployments. However, employing the DNN-based method in a large-scale CFmMIMO system is impractical since generating sufficient training data for guaranteeing an acceptable level is significantly time-consuming. Thus, we propose a low-complexity APG method to handle the problem \eqref{P:SE} in the large-scale CFmMIMO scenario.

\section{Solution For Small-scale CFmMIMO Systems} \label{Sec:small-scale}
In this section, we consider a small-scale CFmMIMO network and propose two different solutions. 
To solve the problem \eqref{P:SE}, we first propose an SCA method. The SCA technique is classified as an iterative algorithm that can find a stationary solution but has a high computational complexity, making it unsuitable for most practical systems. Hence, we later develop an alternative learning-based approach to achieve approximately the performance of the SCA algorithm, while entailing a lower computational complexity.

First, to deal with the binary constraint \eqref{a}, we see that\cite{Tung2022joint} 
\begin{align}
    \label{BinaryConstraint}
    x\in\{0,1\}\Leftrightarrow x\in[0,1]\,\&\,x-x^2\leq0,
\end{align}
therefore, \eqref{a} can be replaced by 
\begin{align}
    \label{Q}
    & Q (\aaa) \triangleq \sum_{k\in\K} \sum_{m\in\MM} (a_{mk} - a_{mk}^2) \leq 0,
    \\
    \label{a:cons:1}
    & 0 \leq a_{mk}, \forall m,k,
    \\
    \label{a:cons:2}
    & a_{mk} \leq 1, \forall m,k.
\end{align}
In the light of \eqref{sumtheta}, we replace the constraint \eqref{thetaa:relation} by
\begin{align}
    \label{thetamk:2}
    \theta_{mk}^2 \leq a_{mk}, \forall m,k.
\end{align}
Problem \eqref{P:SE} is now equivalent to
\begin{align}
    \label{P:SE:equiv}
    \underset{\x\in\widetilde{\FF}}{\min}\,\, &
    - \sum_{k\in\K} \SE_k(\THeta),
\end{align}
where $\x \triangleq \{\aaa,\THeta\}$, $\widetilde{\FF}\triangleq \{\eqref{thetamk}, \eqref{sumtheta}, \eqref{SEQoS}, \eqref{fronthaul:cons}-\eqref{atleastoneAP:cons}, \eqref{Q}-\eqref{thetamk:2}\}$ is a feasible set.

\subsection{SCA-Based Algorithm} \label{Sec:SCA}
The concept of SCA involves the iterative approximation of a non-convex optimization problem by decomposing the original problem into a sequence of convex sub-problems \cite{Gio2019TWC}. For the sake of the algorithm development, we transform problem \eqref{P:SE:equiv} into a more tractable form and derive an SCA algorithm to solve it. First, we consider the following problem
\begin{align}
    \label{P:SE:equiv:2}
    \underset{\x\in\FF}{\min}\,\, &
     \LL(\x),
\end{align} 
where $\FF\triangleq \{\eqref{thetamk}, \eqref{sumtheta}, \eqref{SEQoS}, \eqref{fronthaul:cons}-\eqref{atleastoneAP:cons}, \eqref{a:cons:1}-\eqref{thetamk:2}\}$ is a feasible set, $\LL(\x) \triangleq - \sum_{k\in\K} \SE_k(\THeta) + \lambda Q(\aaa)$ is the Lagrangian of \eqref{P:SE:equiv}, $\lambda$ is the Lagrangian multiplier corresponding to constraint \eqref{Q}.

\begin{proposition}
\label{proposition-dual}
The value $Q_{\lambda}$ of $Q$ at the solution of \eqref{P:SE:equiv:2} corresponding to $\lambda$ converges to $0$ as $\lambda \rightarrow +\infty$. Also, problem \eqref{P:SE:equiv} has strong duality, i.e.,
\begin{equation}\label{Strong:Dualitly:hold:1}
\underset{\x\in \widetilde{\FF}}{\min}\,\,
-\sum_{k\in\K} \SE_k(\THeta)
=
\underset{\lambda\geq0}{\sup}\,\,
\underset{\x\in{\mathcal{F}}}{\min}\,\,
\LL (\x).
\end{equation}
Then, \eqref{P:SE:equiv:2} is equivalent to \eqref{P:SE:equiv} at the optimal solution $\lambda^* \geq0$ of the sup-min problem in \eqref{Strong:Dualitly:hold:1}.
\end{proposition}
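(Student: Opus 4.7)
My approach rests on the elementary observation that $Q(\aaa)=\sum_{m,k}(a_{mk}-a_{mk}^2)\geq 0$ throughout $\FF$ (since each $a_{mk}\in[0,1]$), with equality iff every $a_{mk}\in\{0,1\}$. Hence \eqref{P:SE:equiv:2} is exactly the exterior-penalty reformulation of \eqref{P:SE:equiv} in which the binary constraint $Q(\aaa)\leq 0$ has been absorbed into the objective with weight $\lambda$, so the proposition reduces to a standard penalty-convergence analysis combined with a weak/strong duality pair.

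For the first assertion ($Q_\lambda \to 0$), the plan is to fix any $\x_0\in\widetilde{\FF}\subseteq\FF$ and compare $\LL$ at a minimizer $\x^{(\lambda)}$ of \eqref{P:SE:equiv:2} with its value at $\x_0$. Since $Q(\aaa_0)=0$, we have $\LL(\x_0)=-\sum_k\SE_k(\THeta_0)$, so optimality of $\x^{(\lambda)}$ gives
\begin{equation*}
\lambda\,Q(\aaa^{(\lambda)})\;\leq\;\sum_k\SE_k(\THeta^{(\lambda)})-\sum_k\SE_k(\THeta_0).
\end{equation*}
The feasible set $\FF$ is compact, being the intersection of closed sets defined by continuous constraints on bounded variables ($\theta_{mk}^2\leq a_{mk}\leq 1$, $\sum_k\theta_{mk}^2\leq 1$), and $\SE_k$ is continuous, hence bounded, on $\FF$. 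Dividing by $\lambda$ and sending $\lambda\to+\infty$ therefore forces $Q_\lambda := Q(\aaa^{(\lambda)})\to 0$.

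The duality identity \eqref{Strong:Dualitly:hold:1} is then established in two directions. Weak duality is immediate: on $\widetilde{\FF}$ the penalty vanishes, so for every $\lambda\geq 0$,
\begin{equation*}
\min_{\x\in\FF}\LL(\x)\;\leq\;\min_{\x\in\widetilde{\FF}}\LL(\x)\;=\;\min_{\x\in\widetilde{\FF}}\Big(-\sum_k\SE_k(\THeta)\Big),
\end{equation*}
whence $\sup_{\lambda\geq 0}\min_{\x\in\FF}\LL\leq \min_{\x\in\widetilde{\FF}}(-\sum_k\SE_k)$. For the reverse inequality I would use compactness of $\FF$ to extract a convergent subsequence $\x^{(\lambda_n)}\to\bar{\x}$ along $\lambda_n\to+\infty$; the first assertion forces $Q(\bar{\aaa})=0$, so $\bar{\aaa}$ is binary, and continuity of the remaining QoS, fronthaul-load, per-AP-load and power constraints then yields $\bar{\x}\in\widetilde{\FF}$. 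Dropping the nonnegative penalty term and passing to the limit,
\begin{equation*}
\min_{\x\in\FF}\LL(\x)\Big|_{\lambda_n}\;\geq\;-\sum_k\SE_k(\THeta^{(\lambda_n)})\;\longrightarrow\;-\sum_k\SE_k(\bar{\THeta})\;\geq\;\min_{\x\in\widetilde{\FF}}\Big(-\sum_k\SE_k(\THeta)\Big),
\end{equation*}
which closes the strong-duality identity. The last sentence of the proposition is then automatic: at any $\lambda^*$ attaining the outer supremum the two optimal values coincide, so any minimizer of $\LL$ on $\FF$ at $\lambda^*$ with $Q=0$ solves \eqref{P:SE:equiv}.

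The principal obstacle is the careful treatment of the limit $\lambda\to+\infty$: strictly speaking the outer supremum in \eqref{Strong:Dualitly:hold:1} need not be attained at a finite value, so the $\lambda^*$ referenced in the proposition should be interpreted as ``sufficiently large'' rather than necessarily finite unless an exact-penalty argument (e.g., via a Mangasarian--Fromovitz-type regularity condition) is invoked, which the authors do not appear to require. The secondary technical point is verifying that every non-Lagrangianized constraint defining $\FF$ is preserved under the subsequential limit, which in turn rests on continuity of $\SE_k(\THeta)$ — straightforward but worth spelling out explicitly.
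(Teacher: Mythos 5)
Your proof is correct, and it is essentially the paper's approach: the paper omits the argument entirely, deferring to \cite[Proposition~1]{vu18TCOM}, and the reasoning there is exactly the exterior-penalty/weak-duality argument you give (nonnegativity of $Q$ on $\FF$, comparison with a fixed point of $\widetilde{\FF}$ to force $\lambda Q_\lambda$ bounded, compactness of $\FF$ plus closedness of the remaining constraints to pass to a subsequential limit in $\widetilde{\FF}$). Your closing caveat that the supremum over $\lambda$ need not be attained at a finite $\lambda^*$ is also consistent with the paper, which immediately after the proposition concedes that in practice one only takes $\lambda$ sufficiently large so that $Q_{\lambda}/(MK)\leq\varepsilon$.
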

\begin{proof}
     The proof is similar to \cite[Proposition 1] {vu18TCOM} and omitted due to lack of space.
\end{proof}
Also note that $Q_{\lambda}$ must be zero to obtain the optimal solution to \eqref{P:SE:equiv}. As stated in Proposition 1, the optimal solution to \eqref{P:SE:equiv} can be obtained as $\lambda \rightarrow +\infty$. It is acceptable for $Q_{\lambda}$ to be sufficiently small with a sufficiently large value of $\lambda$ in practice. In the simulation part, $\lambda=100$ is enough to ensure that $Q_{\lambda}/(MK) \leq \varepsilon$ with $\varepsilon = 5\times 10^{-5}$. Note that this approach of selecting $\lambda$ has been widely used in the literature \cite{vu18TCOM} and references therein.

We now introduce the new variables $\widehat{\w} \triangleq \{\widehat{w}_k\}$ as
\begin{align}
    \label{Vhat}
    V_k(\THeta) \geq \widehat{w}_k, \forall k,
\end{align}
and we have
\begin{align}
    \label{SE:V}
    \SE_k(\THeta) \leq \widetilde{\SE}_k (\THeta, \widehat{\w}) \triangleq \frac{\tau_c-\tau_p}{\tau_c} \log_2 \Big( 1 + \frac{(U_k(\THeta))^2}{\widehat{w}_k} \Big).
\end{align}
Then, problem \eqref{P:SE:equiv:2} is equivalent to
\begin{subequations}\label{P:SE:equiv:3}
\begin{align}
    \underset{\hat{\x}}{\min}\,\, &
     \widetilde{\LL}(\x,\ttt)
     \\
    \mathrm{s.t.} \,\,
    \nonumber
    & \eqref{thetamk}, \eqref{sumtheta}, \eqref{Khat},\eqref{atleastoneAP:cons}, \eqref{a:cons:1}-\eqref{thetamk:2}, \eqref{Vhat}
    \\
    \label{SE:lb}
    & t_k \leq \SE_k(\THeta), \forall k
    \\
    \label{QoS:cons:2}
    & t_k \geq \SE_{\text{QoS}}, \forall k
    \\
    \label{fronthaul:cons:2}
    & \sum_{k\in\K} a_{mk} \hat{t}_{k} \leq C_{\max}, \forall m
    \\
    \label{that}
    & \widetilde{\SE}_k(\THeta,\widehat{\w}) \leq \hat{t}_k, \forall k,
\end{align}
\end{subequations}
where $\hat{\x} \triangleq \{\aaa, \THeta, \ttt,\hat{\ttt},\widehat{\w}\}$, $\widetilde{\LL}(\x,\ttt) \triangleq - \sum_{k\in\K} t_k + \lambda Q(\aaa)$, $\ttt \triangleq \{t_k\}, \hat{\ttt} \triangleq \{\hat{t}_k\}, \forall k$, are additional variables. Here, \eqref{QoS:cons:2}-\eqref{that} follow \eqref{SEQoS} and  \eqref{fronthaul:cons}.

Note that the function $f( {x,y} ) = \log(1 + \frac{{(x)}^2}{y})$ has a lower bound \cite[Eq. (40)] {Vu2020Wcom}
\begin{align}
     \label{LowerBound:1}
\begin{split}
    f({x,y}) & \geq \log \Big( 1 + \frac{ (x^{(n)})^2 }{y^{(n)}} \Big) - \frac{(x^{(n)})^2}{(y^{n})} \\
             & + 2\frac{x^{(n)}x}{y^{(n)}} - \frac{\Big( (x^{{(n)}})^2 ((x)^2+y) \Big) }{y^{(n)} \Big( (x^{{(n)}})^2 + y^{(n)} \Big)},
\end{split}
\end{align}
where $ x \in \mathbb{R}, y > 0, y^{(n)} > 0$. Now, ${\SE}_k$ has a concave lower bound $\widehat{\SE}_k$ as
\begin{align} \label{SEHat}
\begin{split}
     \widehat{\SE}_k(\THeta) & \triangleq \frac{\tau_c-\tau_p}{\tau_c \log 2} \Bigg[ 
    \log \Big( 1 + \frac{(U_k^{(n)})^2} {\widehat{w}_k^{(n)}} \Big) - \frac{\Big( U_k^{(n)}\Big)^2}{\widehat{w}_k^{(n)}} \\
    & + 2\frac{U_k^{(n)} U_k(\THeta)}{\widehat{w}_k^{(n)}}  - \frac{\Big( U_k^{(n)} \Big)^2 \Big( (U_k(\THeta) \Big)^2 + {\widehat{w}_k( \THeta))}} {{\widehat{w}_k^{(n)}}\Big(\Big( U_k^{(n)}\Big)^2 + {\widehat{w}_k^{(n)}} \Big)} 
    \Bigg].
\end{split}
\end{align}
Thus, constraints \eqref{SE:lb} can be approximated by the following convex constraint
\begin{align}
    \label{SE:lb:convex}
    t_k \leq \widehat{\SE}_k (\THeta), \forall k. 
\end{align}
To deal with constraint \eqref{that}, we notice that the function $f( {x,y} ) = \log(1 + \frac{{(x)}^2}{y})$ has a upper bound
\begin{align} \label{LowerBound:2}
    \begin{split}
    f({x,y}) &\leq \log \Big( (x^{(n)})^2 + y^{(n)} \Big) + \frac{(x^2+y)}{(x^{(n)})^2+y^{(n)})} \\
             & \qquad \quad \qquad \quad \ -1 - \log(y), 
    \end{split}
\end{align}
where $\forall x \geq 0, y > 1$. Then, $\widetilde{\SE}_k(\THeta,\widehat{\w})$ has a convex upper bound as
\begin{align} \label{SEHat:2}
\begin{split}
    \bar{\SE}_k(\THeta,\widehat{\w}) & \triangleq 
\frac{\tau_c-\tau_p}{\tau_c \log 2} \Bigg[  \log \Big( (U_k^{(n)})^2 + \widehat{w}_k^{(n)} \Big) \\
& \quad \quad + \frac{\Big( (U_k(\THeta))^2+\widehat{w}_k \Big)}{\Big( U_k^{(n)} \Big) ^2+\widehat{w}_k^{(n)}} -1 - \log(\widehat{w}_k) \Bigg]. 
\end{split}
\end{align}
\vspace{-2pt}
Therefore, constraint \eqref{that} can be approximated by the following convex constraint
\begin{align}
    \label{that:convex}
    \bar{\SE}_k(\THeta,\widehat{\w}) \leq \hat{t}_k, \forall k.
\end{align} 

By invoking the first-order Taylor series expansion at the point ${a_{mk}^{(n)}}$, $Q(\aaa)$ has a convex upper bound as
\begin{align}
    \widehat{Q} (\aaa) \triangleq \sum_{k\in\K} \sum_{m\in\MM} \Big( a_{mk} - 2a_{mk}^{(n)} a_{mk} + \Big( a_{mk}^{(n)} \Big)^2 \Big).
\end{align}
Similarly, constraints \eqref{Vhat} and \eqref{fronthaul:cons:2} can be respectively approximated by the following convex constraints
\begin{align}
    \label{Vhat:convex}
    & \sum_{\ell\in\K} \!\sum_{m\in\MM} \!\!\rho_d \Big (2\theta_{m\ell}^{(n)} \theta_{m\ell} \!-\! (\theta_{m\ell}^{(n)})^2 \!\Big) \Big( \!\beta_{mk} \!-\! \delta_{mk} \sigma_{mk}^2 \Big) \!\!+\! 1 \!\geq\! \widehat{w}_k, \forall k,
    \\
    \nonumber
    \label{QoS:cons:2:convex}
    & \sum_{k\in\K} 0.25 \Big[ \Big( a_{mk} + \hat{t}_k \Big) ^2-2 \Big( a_{mk}^{(n)} - \hat{t}_k^{(n)} \Big) \Big( a_{mk} - \hat{t}_k \Big)
    \\
    & \qquad\qquad\qquad + \Big( a_{mk}^{(n)} - \hat{t}_k^{(n)} \Big)^2 \Big] \leq C_{\max}, \forall m.
\end{align}
We refer to Appendix A for more details.


\begin{algorithm}[!t]
\caption{Solving problem \eqref{P:SE:equiv:3} using the SCA approach}
\begin{algorithmic}[1]
\label{alg}
\STATE \textbf{Initialize}: $n\!=\!0, \lambda > 1$, and a random $\hat{\x}^{(0)} \in \widehat{\mathcal{F}}$
\REPEAT
\STATE Update $n=n+1$
\STATE Solve \eqref{P:SE:equiv:4:convex} to obtain its optimal solution $\hat{\x}^*$
\STATE Update $\hat{\x}^{(n)}=\hat{\x}^*$
\UNTIL{convergence}
\end{algorithmic}
\end{algorithm}

At the $(n\!+\!1)$-th iteration, for a given point $\hat{\x}^{(n)}$, we can approximate problem \eqref{P:SE:equiv:3} using the following convex problem 
\begin{align}\label{P:SE:equiv:4:convex}
    \underset{\hat{\x}\in\widehat{\FF}}{\min}\,\, &
    \widehat{\LL}(\hat{\x}),
\end{align}
where $\widehat{\LL}(\hat{\x}) \triangleq \!-\! \sum_{k\in\K} t_k + \lambda \widehat{Q} (\aaa)$, $\widehat{\FF} \triangleq \{\eqref{thetamk}, \eqref{sumtheta}, \eqref{Khat}, \eqref{atleastoneAP:cons}, \\ \eqref{a:cons:1}-\eqref{thetamk:2}, \eqref{QoS:cons:2}, \eqref{SE:lb:convex}, \eqref{that:convex}, \eqref{Vhat:convex}, \eqref{QoS:cons:2:convex}\}$ is a convex feasible set. In Algorithm 1, we present the primary procedures implemented to solve the problem \eqref{P:SE:equiv:3}. Starting from a random point $\hat{\x}\in\widehat{\FF}$, we solve problem \eqref{P:SE:equiv:4:convex} to obtain its optimal $\hat{\x}^*$. Then, this solution is used as the initial point to the subsequent iteration. Algorithm 1 will converge to a stationary point, i.e., a Fritz John solution, of problem \eqref{P:SE:equiv:3} (hence \eqref{P:SE:equiv} or \eqref{P:SE}). The proof of this fact is rather standard and follows from \cite{vu18TCOM}.

Problem \eqref{P:SE:equiv:4:convex} can be transformed to an equivalent problem that involves $A_v\triangleq 2MK + 4K$ real-valued scalar variables, $A_l\triangleq 3MK + M + 3K$ linear constraints, $A_q\triangleq MK + 2M + 2K$ quadratic constraints. Therefore, the algorithm that solves problem \eqref{P:SE:equiv:4:convex} requires a complexity of $\OO(\sqrt{A_l+A_q}(A_v+A_l+A_q)A_v^2)$ \cite{tam16TWC}.
\vspace{-4pt}

\subsection{DL-Based Low Complexity Method} \label{Sec:DNN}
In this subsection, we design a DNN model to solve the non-convex joint optimization problem \eqref{P:SE} but with a very low computational complexity that can meet real-time requirements. The proposed DNN model uses supervised learning to emulate Algorithm 1. It is worth mentioning that supervised learning is easy to train and deploy compared with unsupervised learning, while the latter is sometimes unstable, and the performance largely relies on finding the right parameters \cite{Mostafa2022wcnc}. The main idea of the developed scheme is to learn the unknown mapping function between the LSF coefficients, PC, and UA. Thanks to the universal approximation ability, a DNN model can be implemented to mimic arbitrary mapping functions \cite{AR1993TIT}.
    Let $\sigma(\cdot)$ be a bounded, non-constant continuous function, $\mathbf{I}_{m}$ be the m-dimensional hypercube, and $\Omega \left(\mathbf{I}_{m} \right)$ be the space of continuous function on $\mathbf{I}_{m}$. Given any $f \in \Omega \left( \mathbf{I}_{m}\right) $ and error $\varepsilon > 0$, from \cite [Theorem 2]{LESHNO1993}, there exists $N>0$, $\mathbf{p}\in \RRR^{N}$,  $\mathbf{W} \in \RRR^{N \times m}$, $\mathop {\sup_{{\mathbf{X}} \in {{\mathbf{I}}_m}}} \left| {f({\mathbf{X}}) - \mathcal{F}\left( {\mathbf{X}} \right)} \right| < \varepsilon $, where $\FF(\mathbf{X}) = \mathbf{p}^{\text{T}} \sigma({\mathbf{W}{\mathbf{X}}})$ is the mapping function constructed by a neural network model, where $\mathbf{W}$ denotes the trainable parameters. It means that a single hidden layer model can closely approximate any continuous function arbitrarily.

CNNs use local convolutional kernels to extract spatial features effectively and shared weights to reduce the number of parameters. In contrast, the parameters of a fully connected neural network will rapidly increase with the scale of the network. 
Thus, CNNs have served as a cornerstone for significant breakthroughs in DL. Motivated by \cite{Sal2021globalcom, Trinh2020TWC, Bash2020JESAC}, we design a CNN-based JointCFNet to learn the joint power and UA optimization problem, illustrated in Fig.~\ref{fig:JointCFNetmodel}. The developed model offers a huge reduction of the run time and provides approximate performance compared with that of the adopted SCA iterative algorithm.
\begin{figure*}[t]
	\centering
	\vspace{0em}
    \includegraphics[width=160mm]{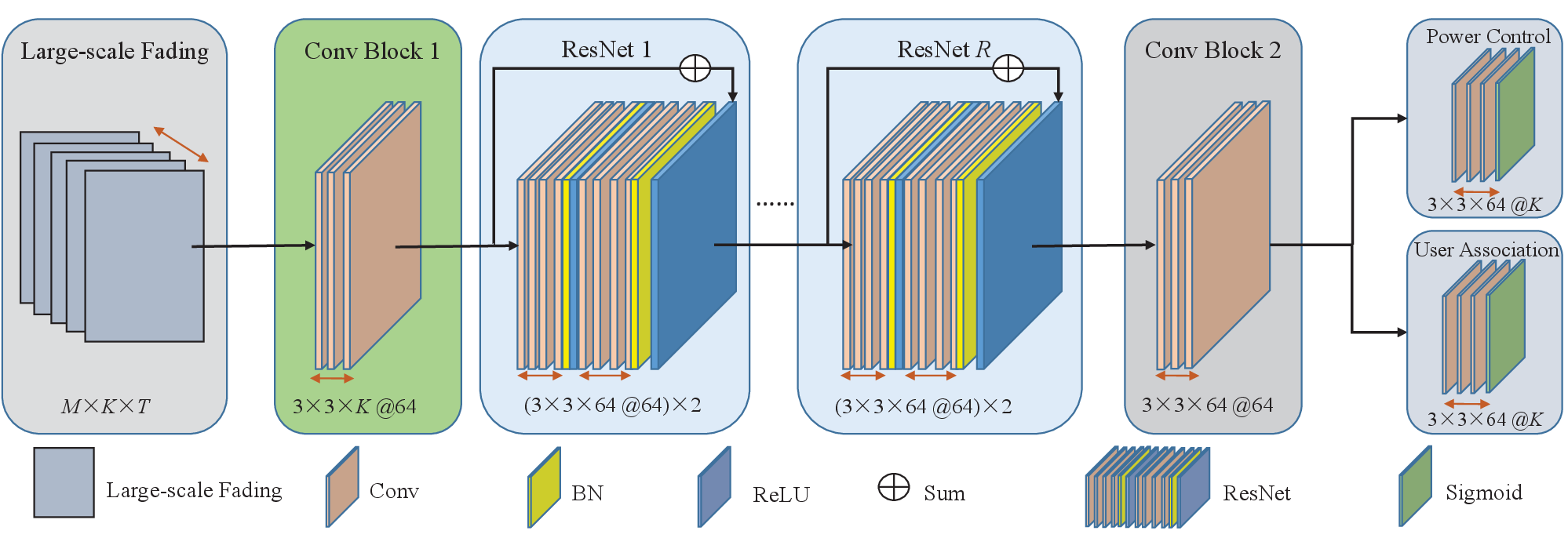}
	\vspace{0em}
	\caption{Structure of the designed JointCFNet for the joint PC and UA.}
	\label{fig:JointCFNetmodel}
\end{figure*}

\subsubsection{Structure of the model} The JointCFNet utilizes the LSF coefficients as input, which is more practical to collect than the instantaneous CSI, and produces the desired PC and UA coefficients as outputs. The convolution blocks are composed of convolutional layers, we define the kernel tensor are $\mathsf{W}_{1} \in \RRR^{3\times{3}\times{K}}$ and $\mathsf{W}_{2} \in \RRR^{3\times{3}\times{64}}$.
Let $@d$ denote the total number of kernels, where their values are $d_1=K$ and $d_2=64$. We omit the bias term from each layer to ensure clarity expression. To guarantee the same dimension between the input and output of each convolutional layer, we set stride and zero padding to 1. Further, we employ the modified residual dense network (ResNet) to fully make use of all the features from the original LSF coefficients. 
Each ResNet comprises convolutional layers, batch normalization layer (BN), and ReLU activation function $f(x) = \max(0,x)$. Finally, the captured information is fed into the PC and UA blocks to calculate the power and UA coefficients using the Sigmoid function. It should be mentioned that designing a DNN topology (e.g., the number of layers, required neurons for each layer, activation function types) to achieve a desired approximation accuracy is also considered as an optimization problem in practice. We vary the number of ResNet, kernel size, filter numbers, and activation functions to find decent configurations with the required mean-square error (MSE) in the training dataset.

    It should be emphasized that the previous works in the literature have proposed various DNN models to perform PC in CFmMIMO. However, these  models are only designed for a single variable optimization, which limits their applications in practice\cite{Sal2021globalcom}, \cite{Lou2022globecom, Bash2020JESAC, Raj2021ICC, Zaher2023TWC}. In this paper, our designed JointCFNet is a two-variable approximator, but it can be easily extended to solve multiple optimization problems. For the multiple variables problems, we can add the desired output blocks after Conv Block 2 (in Fig.~\ref{fig:JointCFNetmodel}), while increasing the convolutional layers and ResNets to capture more channel features.

\subsubsection{Training Phase} To train the JointCFNet, we generate training data and labels using Algorithm 1. In each channel realization, the LSF matrix $\B={[\BETA_{1}^{\text{T}},...\BETA_{M}^{\text{T}}]}^{\text{T}} \in \RRR^{M\times {K}}$, where $\BETA_{m} = {[\beta_{m1},...\beta_{mK}]} \in\RRR^{1 \times{K}}$, the desired PC matrix $\boldsymbol{\Theta}^{\text{opt}}={{[(\THeta_{1}^{\text{opt}})^{\text{T}},...(\THeta_{M}^{\text{opt}})^{\text{T}}}]}^{\text{T}} \in \RRR^{M\times {K}}$, where $\THeta_{m}^{\text{opt}} = {{[\theta_{m1}^{\text{opt}},...\theta_{mK}^{\text{opt}}]}} \in\RRR^{1 \times{K}}$, and the UA matrix $\boldsymbol{\Lambda}^{\text{opt}}={{[(\aaa_{1}^{\text{opt}})^{\text{T}},...(\aaa_{M}^{\text{opt}})^{\text{T}}}]}^{\text{T}} \in \RRR^{M\times {K}}$, where $\aaa_{m}^{\text{opt}} = {{[a_{m1}^{\text{opt}},...a_{mK}^{\text{opt}}}]} \in\RRR^{1 \times{K}}$. One data sample in the training dataset can be written as $\left( \B, \boldsymbol{\Theta}^{\text{opt}}, \boldsymbol{\Lambda}^{\text{opt}}\right)$, where $\B$ is the input, and $\boldsymbol{\Theta}^{\text{opt}}$ and $\boldsymbol{\Lambda}^{\text{opt}}$ are the corresponding output/labels. In the following part, we set $(\mathsf{B}_{j},\mathsf{\Theta}^{\text{opt}}_{j},\mathsf{\Lambda}^{\text{opt}}_{j})$ as the $j$-th training batch randomly selected from the training dataset $\{(\B_{t},\boldsymbol{\Theta}^{\text{opt}}_{t},\boldsymbol{\Lambda}^{\text{opt}}_{t})\}_{t=1}^{T}$,
where $\mathsf{B}_{j}, \mathsf{\Theta}^{\text{opt}}_{j}, \mathsf{\Lambda}^{\text{opt}}_{j}\in \RRR^{M \times K \times b}$, $T$ is the total number of training samples, $b$ is the batch size (number of training samples in this batch), $1 \leq b \leq T$, $j$ is the batch index, $1 \leq j \leq \frac{T}{b}$, and the matrix $\B$ of the $i$-th training sample in $\mathsf{B}_{j}$ can be denoted as $\BB_{j}\left( i \right) \in \RRR^{M \times{K}}$, $1 \leq i \leq b$.  

In the forward propagation, the Conv Block 1 extracts the channel features by using the convolutional layers $\cc_{1} = \text{conv}(\BB_{j}\left( i \right) |\mathsf{W}_{1},d_{2},\Psi_{1})$, where \text{conv} denotes the convolutional operation, $\Psi_{1}$ is the trainable parameters in this layer (in the following, we omit the subscript for convenience), and $\cc_{1}$ is fed to the ResNet1. The input features $\cc_{1}$ first execute convolution under $d_{2}$ kernels $\mathsf{W}_{2}$, giving $\cc_{\text{Res1}}^{\text{Conv1}} = \text{conv}(\cc_{1} |\mathsf{W}_{2},d_{2},\Psi)$. Then, a BN layer normalizes the input data, which benefits the model training. The output of the BN layer can be expressed as $\BB_{\text{Res1}}^{\text{BN1}} = \text{Nor}(\cc_{\text{Res1}}^{\text{Conv1}})$, where $\text{Nor}(\cdot)$ is the normalization operation. Next, the output of the ReLU function, given by $\mathsf{R}_{\text{Res1}}^{\text{ReLU1}}= \max(0, \BB_{\text{Res1}}^{\text{BN1}})$, will be passed to the second convolutional and BN layers. Also note that a summer layer (denoted as $\oplus$ in Fig.~\ref{fig:JointCFNetmodel}) is added before the second ReLU layer, which can be expressed as $\mathsf{S}_{\text{Res1}}^{\text{Sum1}} = \text{sum}(\cc_{1},\BB_{\text{Res1}}^{\text{BN2}})$. To this end, the output of ResNet 1 follows $\cc_{\text{Res}}^{1} = \max(0,\mathsf{S}_{\text{Res1}}^{\text{Sum1}})$. The following $(R-1)$ sequential connected ResNets conduct the same operations, and the final output can be expressed as $\cc_{\text{Res}}^{R} = \max(0,\mathsf{S}_\text{ResR}^{\text{Sum2}})$.

Additionally, another convolutional block is carried out to further extract channel information and can be written as $\cc_{2} = \text{conv}(\cc_{\text{Res}}^{R} |\mathsf{W}_{2},d_{2},\Psi)$. Then, $\cc_{2}$ is finally fed to the PC and UA blocks separately. In each block, the input signal computes convolution and then restricts the output values in the range of $[0,1]$ by using the sigmoid function. Thus, the outputs are $\cc_{\text{power}} = \text{conv}(\cc_{2} |\mathsf{W}_{2},d_{1},\Psi)$, $\cc_{\text{user}} = \text{conv}(\cc_{2} |\mathsf{W}_{2},d_{1},\Psi)$, $\mathsf{Y}_{\text{power}} = \text{sigmoid}(\cc_{\text{power}})$, $\mathsf{Y}_{\text{user}} = \text{sigmoid}(\cc_{\text{user}})$, where the element wise sigmoid function is $\text{sigmoid}(x) = \frac{1}{1+\exp(-x)}$.

In the backward propagation part, we consider the MSE loss over each batch, and the JointCFNet model is trained to minimize the following loss
\begin{align}\label{CNNlossfunction}
    \mathop{\min}\limits_{{\Psi}^{(j)}} J\left( \mathsf{Y}_{\text{power}}, \mathsf{Y}_{\text{user}}|\mathsf{B}_{j}, \Psi^{(j)}\right) = \mathop{\min}\limits_{{\Psi}^{(j)}} J \left(\mathsf{Y}\right),
\end{align}
where
\begin{align}\label{CNNlossfunction:2}
\begin{split}
    \!\!J \left(\mathsf{Y}\right)
    & = \frac{1}{b} \sum\limits_{i=1}^{b}0.5 \left( \mathsf{Y}_{\text{power}}\left(\mathsf{B}_{j}(i)|\Psi^{(j)}\right) \!- \! \Theta^{\text{opt}}(i) \right)^{2} \\ & \quad \quad + \frac{1}{b} \sum\limits_{i=1}^{b}0.5 \left( \mathsf{Y}_{\text{user}}\left(\mathsf{B}_{j}(i)|\Psi^{(j)}\right) \!-\! \Lambda^{\text{opt}}(i) \right)^{2}\!,
    \end{split}
\end{align}
where 
$\Psi^{(j)}$ comprises all the trainable parameters corresponding to $\mathsf{B}_{j}$. The loss function in \eqref{CNNlossfunction} is averaged over all $b$ samples in $\mathsf{B}_{j}$. Here, $\mathsf{Y}_{\text{power}}\left(\mathsf{B}_{j}(i)|\Psi^{(j)})\right) \in \RRR^{M \times{K} \times{1}}$ and $\mathsf{Y}_{\text{user}}\left(\mathsf{B}_{j}(i)|\Psi^{(j)}\right) \in \RRR^{M \times{K} \times{1}}$ are the JointCFNet-based PC and UA coefficients of the $i$-th training sample in $\mathsf{B}_{j}$. Note that $\mathsf{\Theta}^{\text{opt}}(i) \in \RRR^{M \times{K} \times{1}}$ and $\mathsf{\Lambda}^{\text{opt}}(i) \in \RRR^{M \times{K} \times{1}}$ are the corresponding SCA-based PC and UA coefficients. Since PC and UA are equally important in our model, we allocate the same weight to them. We use the Adam optimization \cite{kingma14} to train our data set. More specifically, in the beginning, we initialize all the parameters $\Psi^{(0)}$ using the Gaussian random method. Then, we use stochastic gradient descent 
with momentum $m_{t}$, and learning rate $r_{l}$ to update $\Psi^{(j)}$. 
A brief summary of the training procedure can be found in Algorithm~\ref{algJointCFNet}.

\begin{algorithm}[!t]
\caption{Solving problem \eqref{P:SE} using the JointCFNet}
\begin{algorithmic}[1] \label{algJointCFNet}
\REQUIRE The training dataset $ \DD = \{(\B_{t}, \boldsymbol{\Theta}^{\text{opt}}_{t}, \boldsymbol{\Lambda}^{\text{opt}}_{t})\}_{t=1}^{T}$, batch size $\{b, 1 \leq b \leq T\}$, the number of ResNet blocks $R$, learning rate $r_l$, maximum training epoch $e_{max}$, and the momentum $m_t$  
\STATE \textbf{Initialize}: Initialize the JointCFNet parameters $\Psi^{(0)}$
\REPEAT
\FOR {$j = 1\dots \frac{T}{b}$}
\STATE Generate a training batch $(\mathsf{\mathsf{B}}_{j},\mathsf{\Theta}^{\text{opt}}_{j},\mathsf{\Lambda}^{\text{opt}}_{j})$ 
\STATE Compute ${\mathsf{Y}_\text{power}}\left(\mathsf{B}_{j}(i) | \Psi^{(j)} \right)$ and ${\mathsf{Y}_\text{user}}\left(\mathsf{B}_{j}(i) | \Psi^{(j)} \right)$ in the forward propagation
\STATE Update JointCFNet parameters $\Psi^{(j)}$ using the objective function 
\eqref{CNNlossfunction:2} and the Adam optimization algorithm \cite{kingma14}
in the back propagation
\ENDFOR
\UNTIL{reaching the maximum training epoch $e_{max}$ or the training loss change is negligible in the subsequent epochs}
\end{algorithmic}
\end{algorithm}

\subsubsection{Online Phase and Complexity} Once the training phase ends, the weights and biases are configured as the optimal values. Then, the JointCFNet can compute the PC coefficients $\THeta$ and UA coefficients $\aaa$ under new channel realizations through forward propagation with optimal parameters. This is dissimilar to the conventional iterative algorithms, which must be run from an initial point when the channel changes. The online prediction yields a significant complexity reduction compared with the SCA algorithm and will be compared in Section \ref{Sec:simulation}.

Note that the mathematical structure of the original optimization problem \eqref{P:SE} is complicated with many constraints \eqref{thetamk}, \eqref{sumtheta}, \eqref{a}-\eqref{atleastoneAP:cons}. The output obtained from the trained model might not satisfy some of these constraints. Therefore, we further process the output of the trained model such that the probability of all the constraints being satisfied is as high as possible. Moreover, we observe from our experiments that the constraints that are most likely to be violated are the AP transmit power constraint \eqref{sumtheta} and the AP fronthaul constraint \eqref{fronthaul:cons}. Thus, we process the output by two steps: (i) reduce the transmit power for all the UEs at the APs that have constraint \eqref{sumtheta} violated, and (ii) reduce the UE SEs to have constraint \eqref{fronthaul:cons} satisfied. After processing the output obtained from the trained model by these two steps, we experience in our simulation that there is a high probability of channel realizations that satisfy all the constraints \eqref{thetamk}, \eqref{sumtheta}, \eqref{a}-\eqref{atleastoneAP:cons}. We refer to Section \ref{Sec:simulation} for more detailed discussions. In the following, we explain the above two processing steps. 

Let $\{\theta_{DL,mk}\}$ be the set of PC coefficients obtained from the trained model.
Denote by $\MM_{\eqref{sumtheta}}$ the set of APs that have constraint \eqref{sumtheta} violated. We multiply $\{\theta_{DL,mk}\}_{m\in\MM_{\eqref{sumtheta}}}$ with specific factors to obtain the following PC coefficients  
\begin{align}
    \label{theta:star}
    \theta_{mk}^{\star} = \theta_{DL,mk} \sqrt{\frac{1}{\sum_{k\in\K}\theta_{DL,mk}^2}}, \forall m \in \MM_{\eqref{sumtheta}},
\end{align}
which makes \eqref{sumtheta} satisfied. Now, given the new set of power constraints $\tilde{\mathsf{\Theta}} =\{\theta_{DL,mk}^{\star}\}_{m \in\MM_{\eqref{sumtheta}}} \bigcup \{\theta_{DL,mk}\}_{m \in\MM \setminus \MM_{\eqref{sumtheta}}}$, we compute the new set of the achievable SE of UEs $\{\ddot{\SE}_k\}$ using \eqref{DownlinkSE}. In any channel realization, if $\underset{m\in\MM}{\max} \Big( \sum_{k\in\K} a_{mk} \ddot{\SE}_k \Big) \geq  C_{\max}$, we multiply $\{\ddot{\SE}_k\}$ with factors to obtain the SEs 
\begin{align}
    \label{SEcheck}
    \check{\SE}_k = \ddot{\SE}_k \frac{C_{\max}}{\underset{m\in\MM}{\max} \Big( \sum_{k\in\K} a_{mk} \ddot{\SE}_k \Big)},
\end{align}
while \eqref{SEcheck} leads to 
\begin{align}
    \sum_{k\in\K} a_{mk} \check{\SE}_k \leq C_{\max}, \forall m,
\end{align}
which makes constraint \eqref{fronthaul:cons} satisfied. It should be noted that the step of reducing the UE SEs is reasonable in the space of information theory. It is always feasible to transmit data at the SE that is below the achievable SE obtained from a given set of PC coefficients with an arbitrarily low probability of error.

The computational complexity of the JointCFNet in the online phase is determined by the forward propagation. Let $f_{map}$ be the length of the feature map, $k_{siz}$ denote the kernel size, and $c_{in}$ and $c_{out}$ represent the number of CNN input and output channels, respectively. For a single convolutional layer, the complexity is $\mathcal{O}\left(f_{map}^{2}k_{siz}^{2}c_{in}c_{out}\right)$ \cite{he2015CVPR}. The complexity of our designed JointCFNet is dominated by four modules: the conv blocks, ResNets, PC block, and UA block. It is easy to calculate the complexity of conv blocks1 and block2 as $\mathcal{O}(9MKD_{2})$ and $\mathcal{O}(9MD_{2}^{2})$. Similarly, the complexity of a single ResNet is $\mathcal{O}(18MD_{2}^{2})$. The PC and UA blocks have the same structure; therefore, their complexity is $\mathcal{O}(9MKD_{2})$. Since the convolutional operations mainly determine the complexity, we omit the complexity calculation of the sum and activation functions. According to \cite{Algorithmbook}, we can calculate the rough complexity of JointCFNet as $\mathcal{O}(MKD_{2}^{2})$. We can observe that the complexity of the JointCFNet is negligible compared to the SCA algorithm.

\section{Solution For Large-scale CFmMIMO Systems} \label{Sec:large-scale}
The proposed SCA approach requires solving a series of
convex problems \eqref{P:SE} by interior point methods using off-the-shelf convex solvers. However, these solvers have high complexity and importantly, slow run time when the size of the problem is large (i.e., $MK\geq1000$). 
The DL-based scheme is more time efficient and provides approximation accuracy compared with the iterative algorithms but is a data-hungry model. Specifically, the approximate performance basically relies on the training data size. Collecting sufficient training data and labeling is challenging due to time limitations in practice, especially for a large-scale CFmMIMO network with many APs and UEs. Therefore, in what follows, we propose an alternative approach that has a lower complexity and can find a suboptimal solution to the problem \eqref{P:SE} in a large-scale CFmMIMO system.

\subsection{APG-Based Approach}
We first let $z_{mk}^2 \triangleq a_{mk}, \forall m,k$ and 
\begin{align}
    \label{znonzero}
    0 \leq z_{mk} \leq 1, \forall m,k. 
\end{align}
Constraints \eqref{SEQoS}, \eqref{fronthaul:cons}, \eqref{atleastoneAP:cons}, \eqref{Q} and \eqref{thetamk:2} can be replaced by
\begin{align}
    \label{Q1}
    & Q_1(\z) \triangleq \sum_{k\in\K} \sum_{m\in\MM} (z_{mk}^2 - z_{mk}^4) \leq 0,
    \\
    \label{Q2}
     &Q_2(\THeta) \! \triangleq  \! \sum_{k\in\K}\! \! 
     \Big[\!\max\! \Big(0, \SE_{\text{QoS}} - \SE_k(\THeta)\Big) \Big]^2  \leq 0,
     \\
     \label{Q3}
     \nonumber
     & Q_3 (\THeta,\z) \triangleq \sum_{k\in\K}\! \bigg(\! \Big[\!\max\! \Big(0,1 \!-\!\!\!\! \sum_{m\in\MM}\!\! z_{mk}^2\Big) \Big]^2 \!\!\!   
    \\
    & \qquad \qquad \qquad + \!\!
    \sum_{m\in\MM} \Big[\max(0,\theta_{mk}^2\!-\!z_{mk}^2)\Big]^2 \bigg) \leq 0,
    \\
    \label{Q4}
    & Q_4 (\THeta,\z) \triangleq  \! \sum_{m\in\MM}\! \! 
     \Big[\!\max\! \Big(0, \sum_{k\in\K} z_{mk}^2\SE_{k} (\THeta) - C_{\max} \Big) \Big]^2  \!\leq\! 0.
\end{align}
Therefore, problem \eqref{P:SE:equiv} is equivalent to
\begin{subequations}\label{P:SE:equiv:5}
\begin{align}
    \underset{\vv}{\min}\,\, &
   h(\THeta) \triangleq  - \sum_{k\in\K} \SE_k(\THeta)
    \\
     \mathrm{s.t.} \,\,
     & \eqref{thetamk}, \eqref{sumtheta}, \eqref{znonzero}-\eqref{Q4}
     \\
    \label{Khat:2}
    & \sum_{k\in\K} z_{mk}^2 \leq \widehat{K}_m, \forall m,
\end{align}
\end{subequations}
where \eqref{Khat:2} follows \eqref{Khat}, $\vv \! \triangleq\! [\THeta^\text{T}\!, \z^\text{T}]^\text{T}, \z \triangleq [\z_{1}^\text{T},\dots,\z_{M}^\text{T}]^\text{T},\\ \z_{m} \!\triangleq\! [z_{m1}, \dots, z_{mK}]^\text{T}$. Let $\HHH\triangleq \{\eqref{thetamk}, \eqref{sumtheta}, \eqref{znonzero}-\eqref{Q4}, \eqref{Khat:2}\}$ is the feasible set of \eqref{P:SE:equiv:5}. 

We consider the following problem 
\begin{align}\label{P:SE:equiv:6}
    \underset{\vv\in\widehat{\HHH}}{\min}\,\, &
     f(\vv),
\end{align}
where $\widehat{\HHH}\triangleq \{\eqref{thetamk}, \eqref{sumtheta}, \eqref{znonzero}, \eqref{Khat:2}\}$ is a convex feasible set of \eqref{P:SE:equiv:6}, $f(\vv) \! \triangleq \! h(\THeta) + \! \chi\big[\mu_1 Q_1(\z) \! +  \! \mu_2 Q_2(\THeta) \!+\! \mu_3 Q_3(\THeta,\z) \!+\! \mu_4\\ Q_4(\THeta,\z)\big]$ is the Lagrangian of \eqref{P:SE:equiv:5}, $\mu_1, \mu_2, \mu_3, \mu_4$ are fixed and positive weights,  and $\chi$ is the Lagrangian multiplier corresponding to constraints \eqref{Q1}--\eqref{Q4}.

\begin{proposition}
\label{proposition-dual:2}
The values $Q_{1,\chi},Q_{2,\chi}
,Q_{3,\chi}, Q_{4,\chi}
$ of $Q_1, Q_2, \\Q_3, Q_4
$ at the solution of \eqref{P:SE:equiv:6} corresponding to $\chi$ converge to $0$ as $\chi \rightarrow +\infty$. Also, problem \eqref{P:SE:equiv:5} has strong duality, i.e.,
\begin{equation}\label{Strong:Dualitly:hold}
\underset{\vv\in \HHH}{\min}\,\,
-\sum_{k\in\K} \SE_k(\THeta)
=
\underset{\chi\geq0}{\sup}\,\,
\underset{\vv\in\widehat{\HHH}}{\min}\,\,
f (\vv).
\end{equation}
Then, \eqref{P:SE:equiv:6} is equivalent to \eqref{P:SE:equiv:5} at the optimal solution $\chi^* \geq0$ of the sup-min problem in \eqref{Strong:Dualitly:hold}.
\end{proposition}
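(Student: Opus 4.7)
The plan is to follow the strong-duality-via-penalization template that was used for Proposition~\ref{proposition-dual} (cf.\ \cite{vu18TCOM}), adapted to the four penalty terms appearing here. The key structural facts I would record up front are: (i) the set $\widehat{\HHH}$ is compact, being closed and bounded by \eqref{thetamk}, \eqref{sumtheta}, \eqref{znonzero}, \eqref{Khat:2}; (ii) $h$ and each of $Q_1,Q_2,Q_3,Q_4$ are continuous (the squared hinges $[\max(0,\cdot)]^2$ are even $C^1$); and (iii) each $Q_i$ is nonnegative with $Q_i(\vv)=0$ if and only if the corresponding original constraint among \eqref{Q1}--\eqref{Q4} is satisfied. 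Under these assumptions the inner minimum in \eqref{P:SE:equiv:6} is attained at some $\vv_\chi\in\widehat{\HHH}$ for every $\chi>0$, which sets the stage for taking limits in $\chi$.

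For the first claim, I would pick any feasible point $\bar{\vv}\in\HHH$ of the original problem \eqref{P:SE:equiv:5} and compare $f(\vv_\chi)$ with $f(\bar{\vv})=h(\bar{\vv})$, using that the penalty terms vanish on $\HHH$. Because $h$ is bounded on the compact set $\widehat{\HHH}$, the inequality $f(\vv_\chi)\le h(\bar{\vv})$ forces the aggregate weighted penalty $\chi\sum_{i=1}^{4}\mu_i Q_i(\vv_\chi)$ to stay bounded, so dividing through by $\chi$ and letting $\chi\to+\infty$ gives $\sum_{i=1}^{4}\mu_i Q_i(\vv_\chi)\to 0$; by positivity of the weights $\mu_i$ and nonnegativity of the $Q_i$, each individual $Q_{i,\chi}$ must vanish in the limit.

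For the strong duality claim, I would extract a convergent subsequence $\vv_{\chi_n}\to\vv^{\ast}$ (compactness of $\widehat{\HHH}$) with $\chi_n\to+\infty$. Continuity of each $Q_i$ together with the convergence from the previous step gives $Q_i(\vv^{\ast})=0$, hence $\vv^{\ast}\in\HHH$. The penalties being nonnegative yields $h(\vv_{\chi_n})\le f(\vv_{\chi_n})\le \min_{\vv\in\HHH}h(\vv)$, and continuity of $h$ gives $h(\vv_{\chi_n})\to h(\vv^{\ast})\ge \min_{\vv\in\HHH}h(\vv)$; the two together pin the limit at $\min_{\vv\in\HHH}h(\vv)$. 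Since the map $\chi\mapsto\min_{\widehat{\HHH}}f(\vv)$ is nondecreasing in $\chi$, its supremum equals this limit, which is exactly \eqref{Strong:Dualitly:hold}. The equivalence of \eqref{P:SE:equiv:6} and \eqref{P:SE:equiv:5} at $\chi^{\ast}$ then follows because at such a $\chi^{\ast}$ the penalty is zero and the objective coincides with $h$.

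The main obstacle I anticipate is not conceptual but rather the care needed to handle the four heterogeneous penalties simultaneously: unlike the single polynomial $Q(\aaa)$ in Proposition~\ref{proposition-dual}, the functions $Q_2,Q_3,Q_4$ are built from one-sided squared hinges, so one must argue via compactness and continuity rather than convex analysis. The argument nevertheless goes through because nonnegativity, continuity, and the equivalence $\{Q_i=0\}\Leftrightarrow\{\text{original constraint holds}\}$ are the only structural properties needed, and all four penalties verify them term by term. Feasibility of \eqref{P:SE:equiv:5} is a standing assumption (otherwise the statement is vacuous), and no further regularity beyond continuity is required.
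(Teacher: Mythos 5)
Your proposal is correct and follows essentially the same route as the paper, which omits the proof and defers to the penalization/strong-duality argument of \cite{vu18TCOM}: compactness of $\widehat{\HHH}$, nonnegativity and continuity of the penalties with $Q_i=0$ characterizing feasibility, boundedness of the aggregate penalty forcing $Q_{i,\chi}\to 0$, and monotonicity in $\chi$ to identify the sup with $\min_{\vv\in\HHH}h(\vv)$. The reconstruction is faithful, including the observation that $Q_1\geq 0$ holds only because $z_{mk}\in[0,1]$ on $\widehat{\HHH}$.
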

\noindent
The proof of Proposition~\ref{proposition-dual:2} follows \cite{vu18TCOM}, and hence, is omitted. Theoretically, $Q_{1, \chi}, Q_{2, \chi}
, Q_{3, \chi}, Q_{4, \chi}$ must be zero for obtaining the optimal solution to \eqref{P:SE:equiv:5}. According to Proposition~\ref{proposition-dual:2}, the optimal solution to \eqref{P:SE:equiv:5} can be obtained as $\chi \rightarrow +\infty$. For practical implementation, it is acceptable for $Q_{1,\chi}/(MK), Q_{2,\chi}/K, Q_{3,\chi}/(MK), Q_{4,\chi}/M \leq \varepsilon$, for some small $\varepsilon$ with a sufficiently large value of $\chi$. In our numerical experiments, initialized $\chi=1$ with $\mu_1 = 50, \mu_2 = 10^3, \mu_3 = 5 \times 10^4, \mu_4 = 10$ is enough to ensure that $Q_{1,\chi}/(MK), Q_{2,\chi}/K, Q_{3,\chi}/(MK), Q_{4,\chi}/M \leq \varepsilon$ with $\varepsilon = 10^{-3}$.

\begin{algorithm}[!t]
\caption{Solving problem \eqref{P:SE:equiv:6} using the APG approach}
\begin{algorithmic}[1]\label{algAPG}
\STATE \textbf{Initialize}: $n\!=\!1$, $q^{(0)} \!=\! 0$, $q^{(1)} \!=\! 1$, random $\vv^{(0)}, \bar{\vv}^{(0)}\!\in\!\widehat{\HHH}$,
$\alpha_{\bar{\vv}} > 0$, $\alpha_{\vv} > 0$, 
$\tilde{\vv}^{(1)} = \vv^{(1)} = \vv^{(0)}$, $\zeta\in[0,1)$, $b^{(1)} = 1$, $\upsilon > 0$, $c^{(1)} = f(\vv^{(1)})$, $\Delta > 1$
\REPEAT
\REPEAT
\STATE Update $\bar{\vv}^{(n)}$ as \eqref{vbar} and  $\tilde{\vv}^{(n+1)}$ as \eqref{vtilde} 
\IF{$f(\tilde{\vv}^{(n+1)}) \leq c^{(n)} - \zeta\| \tilde{\vv}^{(n+1)} - \bar{\vv}^{(n)} \|^2$}
\STATE $\vv^{(n+1)} = \tilde{\vv}^{(n+1)}$
\ELSE
\STATE Update $\hat{\vv}^{(n+1)}$ as \eqref{vhat} 
\STATE Update $\vv^{(n+1)}$ as \eqref{vn+1}
\ENDIF
\STATE Update $q^{(n+1)}$ as \eqref{q}
\STATE Update $b^{(n+1)}$ as \eqref{b} and $c^{(n+1)}$ as \eqref{c}
\STATE Update $n=n+1$
\UNTIL{$\big|\frac{f(\vv^{(n)}) - f(\vv^{(n-10)})}{f(\vv^{(n)})}\big| \!\leq\! \epsilon$ or $\big|\frac{h(\THeta^{(n)}) - h(\THeta^{(n-1)})}{h(\THeta^{(n)})}\big| \!\leq\! \epsilon$}
\STATE Increase $\chi = \chi \times \Delta$
\UNTIL{convergence}
\end{algorithmic}
\vspace{-1mm}
\end{algorithm}

Problem \eqref{P:SE:equiv:6} is ready to be solved by APG techniques. The main steps for solving problem \eqref{P:SE:equiv:6} are outlined in Algorithm~\ref{algAPG}.
Starting with a random point $\vv^{(0)}\!\in\!\widehat{\HHH}$, we compute an extrapolated point for accelerating the convergence of the algorithm as \cite[Eq. (10)]{li15NIPS}
\begin{align}
    \label{vbar}
    \!\!\!\bar{\vv}^{(n)} \!\!=\! \vv^{(n)} \!\!+ \tfrac{q^{(n-1)}}{q^{(n)}} (\tilde{\vv}^{(n)} \!\!-\! \vv^{(n)}) \!\!+\! \tfrac{q^{(n-1)}\!-\!1}{q^{(n)}} (\vv^{(n)} \!\!-\! \vv^{(n-1)}),
\end{align}
where $q^{(n)}$ is an extrapolation parameter in iteration $n$ and computed recursively as
\begin{align}
    \label{q}
    q^{(n+1)} = \frac{1 + \sqrt{4(q^{(n)})^2 + 1}}{2}.
\end{align}
From $\bar{\vv}^{(n)}$, we move along the gradient of the function with a dedicated step size $\alpha_{\bar{\vv}}$. Then, the resulting point $(\bar{\vv} - \alpha_{\bar{\vv}} \nabla f(\bar{\vv}))$ is projected onto the feasible set $\widehat{\HHH}$ to obtain
\begin{align}
    \label{vtilde}
    \tilde{\vv}^{(n+1)} = \PPP_{\widehat{\HHH}}(\bar{\vv}^{(n)} - \alpha_{\bar{\vv}} \nabla f(\bar{\vv}^{(n)})),
\end{align}
where $\PPP_{\widehat{\HHH}}(\y)$ is the operator of projecting $\y$ on $\widehat{\HHH}$. 

Since $f(\vv)$ is not convex, $f(\tilde{\vv}^{(n+1)})$ may not improve the objective sequence, i.e.,  $f(\tilde{\vv}^{(n+1)}) > f(\vv^{(n)})$. However, to speed up convergence, we accept $\vv^{(n+1)} = \tilde{\vv}^{(n+1)}$ if the objective value $f(\tilde{\vv}^{(n+1)})$ is smaller than $c^{(n)}$ which is a relaxation of $f(\vv^{(n)})$ but not far from $f(\vv^{(n)})$. Following \cite [Sect. 3.3]{li15NIPS}, we apply the nonmonotone APG method to find the suitable projection point. Define $c^{(n)}$ the weighted average of $f(\vv^{(n)})$ as
\begin{align}
    \label{c:n+1}
    c^{(n)} = \frac{\sum_{n=1}^{\kappa} \zeta^{(\kappa - n)} f(\vv^{(n)}) }{\sum_{n=1}^{\kappa} \zeta^{(\kappa - n)}},
\end{align}
where $\zeta\in[0,1)$. In each iteration, $c^{(n+1)}$ can be computed
\begin{align}
    \label{b}
    & b^{(n+1)} = \zeta b^{(n)} + 1,
    \\
    \label{c}
    & c^{(n+1)} = \frac {\zeta b^{(n)} c^{(n)} + f(\vv^{(n)}) }{b^{( n + 1)}},
\end{align}
where $c^{(1)} = f(\vv^{(1)})$ and $b^{(1)} = 1$. If $f(\tilde{\vv}^{(n+1)}) \leq c^{(n)} - \zeta\| \tilde{\vv}^{(n+1)} - \bar{\vv}^{(n)} \|^2$ does not hold, additional correction steps are used to prevent this event, where $\|\x\|$ is the Euclidean norm of $\x$. Specifically, another point 
\begin{align}
    \label{vhat}
    \hat{\vv}^{(n+1)} = \PPP_{\widehat{\HHH}}(\vv^{(n)} - \alpha_{\vv} \nabla f(\vv^{(n)})),
\end{align}
is computed with a dedicated step size $\alpha_{\vv}$. Then, we update $\vv^{(n+1)}$ by comparing the objective values at $\tilde{\vv}^{(n+1)}$ and $\hat{\vv}^{(n+1)}$ as 
\begin{align}
    \label{vn+1}
    \vv^{(n+1)} \triangleq
    \begin{cases}
      \tilde{\vv}^{(n+1)}, & \text{if $f(\tilde{\vv}^{(n+1)}) \leq f(\hat{\vv}^{(n+1)})$}\\
      \hat{\vv}^{(n+1)}, & \mbox{otherwise}
    \end{cases}.
\end{align}

Since the feasible set $\widehat{\HHH}$ is bounded, it is true that $\nabla f(\vv)$ is Lipschitz continuous with a constant $L$\footnote{Note that the sum of Lipschitz continuous functions, the product of bounded Lipschitz continuous functions, and the maximum of $0$ and a Lipschitz continuous function are all Lipschitz continuous functions. Since $\nabla f(\vv)$ is a sum of such functions, it is Lipschitz continuous.}, i.e.,
\begin{align}
    \label{Lconstant}
    \|\nabla f(\x) - \nabla f(\y)\| \leq L \| \x - \y \|, \forall \x, \y \in \widehat{\HHH}.
\end{align}
Theoretically, the sufficient conditions for the convergence of the APG approach are $\alpha_{\bar{\vv}} < \frac{1}{L}$ and $\alpha_{\vv} < \frac{1}{L}$. However, finding the value of $L$ is challenging due to the complex nature of the objective function $f(\vv)$. Moreover, these conditions are not necessary for practical implementation. In our numerical results,  $\alpha_{\bar{\vv}}$ and $\alpha_{\vv}$ are kept fixed as sufficiently small values and still offer a convergence for  Algorithm~\ref{algAPG}. 

In Algorithm~\ref{algAPG}, the projection in \eqref{vtilde} and \eqref{vhat} is performed by solving the following problem  
\begin{align}
    \label{P:projection}
    \PPP_{\widehat\HHH}(\vv): \underset{\vv\in\RRR^{2MK\times 1}}{\min}\,\, &
    \| \vv - \rr  \|^2\\
    \mathrm{s.t.} \,\,\,\,\,\,\,\,\,
    \nonumber
    & \eqref{thetamk}, \eqref{sumtheta}, \eqref{znonzero}, \eqref{Khat:2},
\end{align}
for any given vector $\rr = [\rr_1^\text{T},\rr_2^\text{T}]^\text{T}\in\RRR^{2MK\times 1}$, where $\rr_1 \triangleq [\rr_{1,1}^\text{T},\dots,\rr_{1,M}^\text{T}]^\text{T}, \rr_{1,m} \triangleq [r_{1,m1}, \dots, r_{1,mK}]^\text{T}, \rr_2 \triangleq [\rr_{2,1}^\text{T},\dots,\rr_{2,M}^\text{T}]^\text{T}, \rr_{2,m} \triangleq [r_{2,m1}, \dots, r_{2,mK}]^\text{T}$. Problem \eqref{P:projection} can be decomposed into two separate subproblems of optimizing $\THeta_m$ and $\z_m$ for each $m$ as 
\begin{align}
    \label{P:projection:theta}
    \underset{\THeta_m\in\RRR^{MK\times 1}}{\min}\,\, &
    \| \THeta_m - \rr_{1,m}  \|^2\\
    \mathrm{s.t.} \,\,\,\,\,\,\,\,\,
    \nonumber
    & \|\THeta_m\|^2 \leq 1, \THeta_m \geq 0,
    \\
    \label{P:projection:z}
    \underset{\z_m\in\RRR^{MK\times 1}}{\min}\,\, &
    \| \z_m - \rr_{2,m}  \|^2\\
    \mathrm{s.t.} \,\,\,\,\,\,\,\,\,
    \nonumber
    & \|\z_m\|^2 \leq \widehat{K}_m, \z_m \geq 0, \z_m \leq 1,
\end{align}
where the constraints in problems \eqref{P:projection:theta} and \eqref{P:projection:z} follow \eqref{thetamk}, \eqref{sumtheta}, \eqref{znonzero}, \eqref{Khat:2}. The solution to the problem \eqref{P:projection:theta} is the projection of a given point onto the intersection of a Euclidean ball and the positive orthant, which have a closed-form as 
\cite{li15NIPS, farooq21TCOM}
\begin{align}
    \label{theta:solution}
    & \THeta_m = \frac{1}{\max\left(1,\|[\rr_{1,m}]_{0+}\|\right)} [\rr_{1,m}]_{0+},
\end{align}
where $[\x]_{0+} \triangleq [\max(0,x_1), \dots, \max(0,x_K)]^\text{T}, \forall \x\in\RRR^{K \times 1}$.   

Problem \eqref{P:projection:z} is to compute the projection of a given point onto the intersection of two convex sets $S_1=\{\z_m: \|\z_m\|^2 \leq \widehat{K}_m, \z_m \geq 0\}$ and $S_2 = \{\z_m: \z_m \leq 1\}$. On the one hand, the optimal solution to problem \eqref{P:projection:z} can be obtained by the method of alternating projections 
On the other hand, motivated by the approach in \cite{pierro09}, instead of finding directly the optimal solution to problem \eqref{P:projection:z}, it is natural to approximate the solution to problem \eqref{P:projection:z} by the composition of the projection onto $S_1$ and that onto $S_2$. More specifically, the approximated solution to problem \eqref{P:projection:z} is
\begin{align}
    \label{z:solution}
    & \z_m = \Bigg[\frac{\sqrt{\widehat {K}_m}}{\max\bigg(\sqrt{\widehat{K}_m},\|[\rr_{2,m}]_{0+}\|\bigg)} [\rr_{2,m}]_{0+}\Bigg]_{1-},
\end{align}
where $[\x]_{1-} \triangleq [\min(1,x_1), \dots, \min(1,x_K)]^\text{T}, \forall \x\in\RRR^{K \times 1}$. The closed-form expression in \eqref{z:solution} would further reduce the running time of the APG method compared with that using the method of alternating projections. 
Therefore, we use \eqref{z:solution} in the proposed Algorithm~\ref{algAPG}. 

On the other hand, the gradient $\nabla f(\vv)$ can be written as $\nabla f(\vv) = [(\frac{\partial}{\partial \THeta} f(\vv))^\text{T}, (\frac{\partial}{\partial \z} f(\vv))^\text{T} ]^\text{T}$. Here, 
\begin{align}
    \frac{\partial}{\partial \THeta} f(\vv) = \Bigg[\left(\frac{\partial}{\partial \THeta_1} f(\vv)\right)^\text{T}, \dots, \left(\frac{\partial}{\partial \THeta_M} f(\vv)\right)^\text{T}\Bigg] ^\text{T},
\end{align}
\begin{align}
    \frac{\partial}{\partial \z} f(\vv) = \Bigg[ \bigg(\frac{\partial}{\partial \z_1} f(\vv)\bigg)^\text{T}, \dots, \bigg(\frac{\partial}{\partial \z_M} f(\vv)\bigg)^\text{T} \Bigg] ^\text{T},
\end{align}
where $\frac{\partial}{\partial \THeta_m} f(\vv) = [\frac{\partial}{\partial \theta_{m1}} f(\vv), \dots, \frac{\partial}{\partial \theta_{mK}} f(\vv)]^\text{T} $ and $\frac{\partial}{\partial \z_m} f(\vv) = [\frac{\partial}{\partial z_{m1}} f(\vv), \dots, \frac{\partial}{\partial z_{mK}} f(\vv)]^\text{T} $. Thus, the values of $\frac{\partial}{\partial \theta_{mk}} f(\vv)$ and $\frac{\partial}{\partial z_{mk}} f(\vv)$ are computed by
\begin{align} \label{diff:theta:f1}
\begin{split}
    \frac{\partial}{\partial \theta_{mk}} f(\vv) & = \frac{\tau_p - \tau_c}{\tau_c\log 2} \sum_{i\in\K} \Bigg( \frac{\frac{\partial}{\partial \theta_{mk}}(U_i(\vv) + V_i(\vv))}{(U_i(\vv) + V_i(\vv))} \\ 
    & \qquad \quad - \tfrac{\frac{\partial}{\partial \theta_{mk}} V_i(\vv)}{V_i(\vv)} \Bigg) + \chi \frac{\partial}{\partial \theta_{mk}} \widetilde{Q} (\vv),
\end{split}
\end{align}

\begin{align} \label{diff:z:f1}
\begin{split}
    \frac{\partial}{\partial z_{mk}} f(\vv) & = \frac{\tau_p - \tau_c}{\tau_c\log 2} \sum_{i\in\K} \Bigg( \frac{\frac{\partial}{\partial z_{mk}}(U_i(\vv) + V_i(\vv))}{(U_i(\vv) + V_i(\vv))} \\ 
    & \qquad \quad - \tfrac{\frac{\partial}{\partial z_{mk}} V_i(\vv)}{V_i(\vv)} \Bigg) + \chi \frac{\partial}{\partial z_{mk}} \widetilde{Q} (\vv),
\end{split}
\end{align}
where $\widetilde{Q}(\vv) \triangleq \mu_1 Q(\z) + \mu_2 Q_2(\THeta) + \mu_3 Q_3(\z,\THeta) + \mu_4 Q_4(\THeta,\z)$. $\frac{\partial}{\partial \theta_{mk}} U_i(\vv)$ and $\frac{\partial}{\partial \theta_{mk}} V_i(\vv)$ can be expressed as \eqref{diff:Ui} and \eqref{diff:Vi}, shown in the uppermost section of this page. Moreover, from the definitions of $Q_1, Q_2, Q_3, Q_4$ in \eqref{Q1}--\eqref{Q4}, we can derive the expression of $\frac{\partial}{\partial \theta_{mk}}  \widetilde{Q} (\vv)$ and $\frac{\partial}{\partial z_{mk}}  \widetilde{Q} (\vv)$, which can be observed at the middle of this page. We refer to Appendix B for more details. 

\begin{figure*}
\begin{align} \label{diff:Ui}
\frac{\partial}{\partial \theta_{mk}} U_i(\vv) = 
    \begin{cases}
      2 \left( \sum_{m\in\MM} \sqrt{ \rho_d  (N-|\SSS_m|)  \sigma_{mk}^2 }  \theta_{mk} \right) \times 
      \sqrt{\rho_d  (N-|\SSS_m|) \sigma_{mk}^2 }, \ i = k\\
      \quad \quad \quad \quad \quad \quad \quad \quad \quad 0, \quad \quad \quad \quad \quad \quad \quad \quad \quad \quad \quad \quad \quad \quad \quad \quad \quad i \neq k
    \end{cases},
\end{align}
\hrulefill
\end{figure*}

\begin{figure*}
\begin{align} \label{diff:Vi}
\frac{\partial}{\partial \theta_{mk}} V_i(\vv) = 
    \begin{cases}
      2 \rho_d (\beta_{mk} - \delta_{mk} \sigma_{mk}^2) \theta_{mk}, \ \quad \quad \quad i = k\\
      2 \rho_d (\beta_{mi} - \delta_{mi} \sigma_{mi}^2) \theta_{mk}, \quad \quad \quad \quad i \neq k
    \end{cases},
\end{align}
\hrulefill
\end{figure*}

\begin{figure*}
\begin{align} \label{diff:theta:Q}
\nonumber
    \frac{\partial}{\partial \theta_{mk}}  \widetilde{Q} (\vv) & =  \mu_3 4 \max\left(0,\theta_{mk}^2 - z_{mk}^2\right) \theta_{mk} - \mu_2 \sum_{i\in\K} 2\max\! \big(0, \SE_{\text{QoS}} - \SE_i(\THeta)\big) \times \frac{\tau_c - \tau_p}{\tau_c\log 2}\! \Bigg( \frac{\frac{\partial}{\partial \theta_{mk}}(U_i(\vv) + V_i(\vv))}{(U_i(\vv) + V_i(\vv))}  - \\ & \frac{\frac{\partial}{\partial \theta_{mk}} V_i(\vv)}{V_i(\vv)} \Bigg)  +
    \mu_4 z_{mi}^2 \times \sum_{i\in\K}  2 \max\! \Big(0,\! \sum_{i\in\K}\! z_{mi}^2\SE_{i} (\THeta) - C_{\max} \Big)  \! \times \frac{\tau_c - \tau_p}{\tau_c\log 2}\! \Bigg( \frac{\frac{\partial}{\partial \theta_{mk}}(U_i(\vv) + V_i(\vv))}{(U_i(\vv) + V_i(\vv))} - \frac{\frac{\partial}{\partial \theta_{mk}} V_i(\vv)}{V_i(\vv)} \Bigg),
\end{align}
\hrulefill
\end{figure*}

\begin{figure*}
\begin{align} \label{diff:z:Q}
\begin{split}
 \frac{\partial}{\partial z_{mk}}  \widetilde{Q} (\vv) =  \mu_1 (2z_{mk} - 4z_{mk}^3) - \mu_3 4 \max(0,\theta_{mk}^2 - z_{mk}^2) z_{mk}
    & - \mu_3 4\max\! \Big(0,1 - \sum_{m\in\MM} z_{mk}^2\Big) z_{mk}\\
    & + \mu_4 4 \max\! \Big(0, \sum_{i\in\K}\! z_{mi}^2\SE_{i} (\THeta) - C_{\max} \Big) z_{mk} \SE_k(\vv).
\end{split}
\end{align}
\hrulefill
\end{figure*}

In each iteration, the APG-based Algorithm~\ref{algAPG} only requires computing the gradient and projecting a point into the feasible $\widehat\HHH$ with closed-form solutions \cite{farooq21TCOM}. The complexity of $f(\vv)$ is $\mathcal{O}(MK^{2})$; hence, the complexity of $\nabla f(\vv)$ is also $\mathcal{O}(MK^{2})$. Moreover, the complexity of projection operations in \eqref{theta:solution} and \eqref{z:solution} is $\mathcal{O}(MK)$ since $\x\in\RRR^{K \times 1}$ for a given AP. To this end, the complexity in each iteration of the derived APG algorithm is $\mathcal{O}(MK^{2})$. Hence, APG has a significantly lower computational complexity in comparison to SCA.

\section{Performance Evaluation} \label{Sec:simulation}
In this section, we conduct numerical simulations to evaluate the performance of the proposed approaches in small-scale and large-scale CFmMIMO systems with different numbers of APs and UEs, respectively. 

\subsection{Simulation Setup}
We consider a CFmMIMO network, where the APs and UEs are randomly located in a square of $1\times 1$ km$^2$. This square is wrapped-around to emulate a CFmMIMO network with an infinite area. The distances between adjacent APs are at least $50$ m. 
We set the number of antennas in each AP as $N=2$, the coherence block $\tau_c\!=\!200$ samples, and the fronthaul threshold $C_{\max} = 20$ bit/s/Hz, $\forall m$. The QoS SE is $\SE_{\text{QoS}}=0.2$ bit/s/Hz or $4$ Mbps for a bandwidth of $20$ MHz. This is, for example, the requirement for the live-streamed media on appliances of HD video (1080P) \cite{2021_HDVideo}. As stated in\cite{HienTWC2017}, the noise power = bandwidth $\times$ $K_{B}$ $\times$ $T_{0}$ $\times$ noise figure (W), where $K_{B} = 1.381 \times 10^{-23}$ (Joule/Kelvin) is the Boltzmann constant, and $T_{0} = 290$ (Kelvin) is the noise temperature. This work sets the noise figure as $9$ dB and bandwidth as $20$ MHz. Hence, the noise power is $-92$ dBm. The large-scale fading coefficients, $\beta_{mk}$, are modeled similarly to \cite[Eqs. (37), (38)]{emil20TWC}. Let $\tilde{\rho}_d\!=\!1$ W and $\tilde{\rho}_p\!=\!0.1$ W be the maximum transmit power of the APs and uplink pilot symbols, respectively. The maximum transmit powers $\rho_d$ and $\rho_p$ are normalized by the noise power. For the small-sale system, we consider $M \in \{25,36 \}$, $K \in \{5,7 \}$, and $\widehat{K}_m = \widehat{K} \in \{3,5 \}$. We compare the performance under different APs in the large-scale systems, where $M \in \{150, 300 \}$, $K=40$, and $\widehat{K}_m = \widehat{K} = 15$. Each AP $m$ chooses its subset $\SSS_m$ by selecting the UEs that contribute at least $\mu\%$ of the overall channel gain, i.e., $\sum_{k\in\SSS_m} \frac{\beta_{mk}}{\sum_{\ell\in\K} \beta_{m\ell}} \geq \mu \%$ \cite{Giovanni20TWC} while $\mu$ is adjusted to guarantee $N > |\SSS_m|$. In the Algorithm~\ref{algAPG}, we set $\Delta = 2$, $\alpha_{\bar{\vv}} = \alpha_{\vv} = 10^{-4}$, $\zeta = 10^{-1}$, and $\upsilon = 10^{-2}$. 

We use Algorithm 1 with the input of $28,000$ $\B$ to generate $28,000$ data samples $\left( \B, \boldsymbol{\Theta}^{\text{opt}}, \boldsymbol{\Lambda}^{\text{opt}}\right)$ for the training dataset.
We use Algorithm 1 with the input of $3,100$ different values of $\B$ to generate $3,100$ data samples $\left( \B, \boldsymbol{\Theta}^{\text{opt}}, \boldsymbol{\Lambda}^{\text{opt}}\right)$ for the test dataset. 

The batch size $b = 512$, the number of ResNet blocks $R=3$, the learning rate varies between $10^{-3}$ to $10^{-5}$, the maximum training epoch $e_{max}=400$, and the momentum $m_t=0.99$. We train our designed JointCFNet using Algorithm~\ref{algJointCFNet} on an Intel (R) i$7-9800$X CPU with an Nvidia GeForce RTX $2080$ Ti. For channel realizations wherein the obtained solution after post-processing the output of JointCFNet does not satisfy any of constraints \eqref{thetamk}, \eqref{sumtheta}, \eqref{a}-\eqref{atleastoneAP:cons}, the SEs of all UEs in that channel realization are set to zero. To evaluate the effectiveness of our proposed schemes (\textbf{SCA}, \textbf{JointCFNet}, and \textbf{APG}), we compare them with the following methods:
\begin{itemize}
        \item \textbf{Full UE association (FULL)}: All the UEs are served by all the AP, i.e., $a_{mk}=1, \forall m,k$, without constraints  \eqref{fronthaul:cons}--\eqref{atleastoneAP:cons}. We then apply our algorithm to optimize the PC coefficients $\THeta$ under the transmit power constraint for each AP.
	\item \textbf{Heuristic (HEU)}: First, each UE is associated to the AP that has the strongest gains to guarantee             
        \eqref{atleastoneAP:cons} and is different from the associated APs of other UEs. After this step, let $\kappa_m$ be the number of UEs that is associated to AP $m$. To guarantee \eqref{Khat}, each AP $m$ fills up its set of $\widehat{K}$ UEs to serve by selecting $(\widehat{K} - \kappa_m)$ UEs that have the strongest channel gains. PC coefficients $\THeta$ are optimized similarly as \textbf{FULL}. 
\end{itemize}
The UE association approaches discussed in \cite{buzzi2017downlink,nguyen2017energy,mai2018pilot,Gio2019TWC,dan2020} do not consider the maximum number of UEs served by one AP in \eqref{Khat} as well as the maximum fronthaul signaling load, and hence, are not compared with our proposed schemes. 


\subsection{Performance of Small-Scale Systems}
Figure~\ref{Fig:learning} depicts the training and validation losses of the designed \textbf{JointCFNet} for different numbers of APs and UEs. We can observe that both the training and validation loss curves decrease when increasing the number of epochs. Although there is around $0.2$ gap between the training and validation losses, the trained model provides high approximation accuracy compared with the conventional \textbf{SCA} algorithm (see Figs.~\ref{Fig:SumSE_M36}--\ref{Fig:PerSE_M25}). Therefore, we can conclude that the final model fits the training data well without overfitting or underfitting.
\begin{figure}[t!]
 \centering
 {\includegraphics[width=0.49\textwidth]{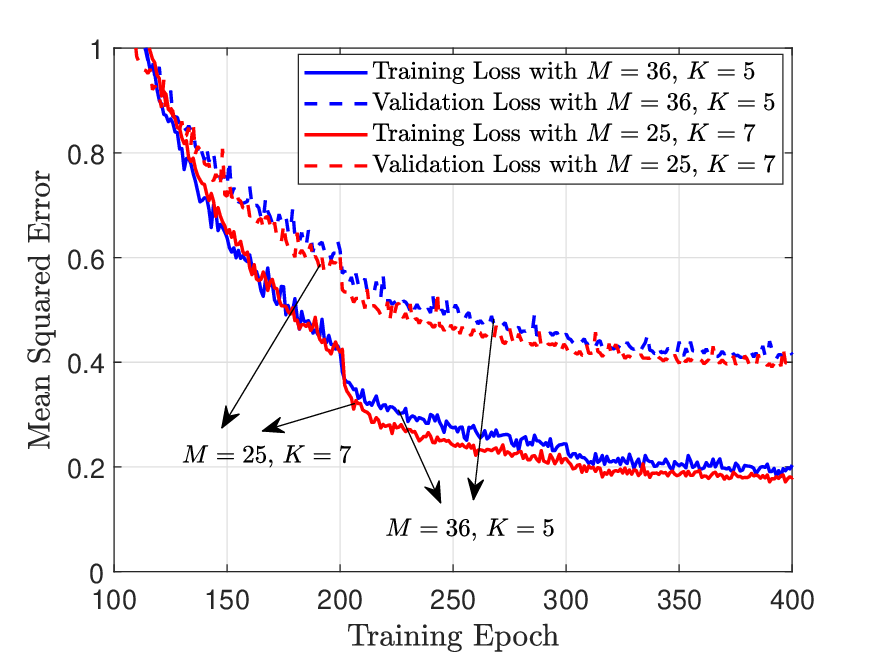}}
 \vspace{-5mm}
 \caption{The training curves of MSE with $M=25, K=7, \widehat{K}=5$, and $M=36, K=5, \widehat{K}=3$.}
 \label{Fig:learning}
 \vspace{-5mm}
\end{figure}
\begin{figure}[t!]
 \centering
 {\includegraphics[width=0.49\textwidth]{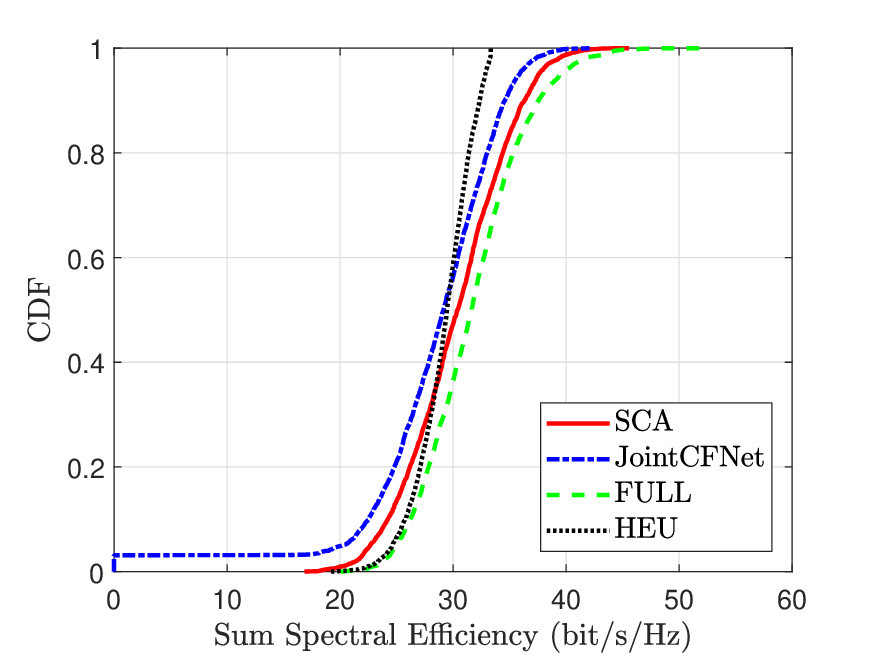}}
 \vspace{-5mm}
 \caption{CDF of the sum SE in a small-scale CFmMIMO system with $M=36, K=5$, and $\widehat{K}=3$.}
 \label{Fig:SumSE_M36}
 \vspace{-0mm}
\end{figure}

Figures~\ref{Fig:SumSE_M36} and~\ref{Fig:SumSE_M25} demonstrate the CDF of the sum SE of the systems with different numbers of APs and UEs. As seen, \textbf{FULL} provides the best performance because all the APs serve all the UEs. \textbf{JointCFNet} provides feasible solutions with high probabilities, i.e., up to $95\%$. This confirms the effectiveness of our \textbf{JointCFNet} solution under the complicated nature of the original optimization problem \eqref{P:SE}.  
Figure~\ref{Fig:SumSE_M36} shows small performance gaps between the proposed schemes compared with \textbf{FULL}. Specifically, the median sum SE of the \textbf{JointCFNet} is up to $97\%$ that of \textbf{SCA}. 
Although \textbf{HEU} and \textbf{JointCFNet} have the same performance, \textbf{JointCFNet} runs much faster than \textbf{HEU}. We refer to Section \ref{Sec:runtime} for the discussion on the run time comparison. 

\begin{figure}[t!]
 \centering
 {\includegraphics[width=0.49\textwidth]{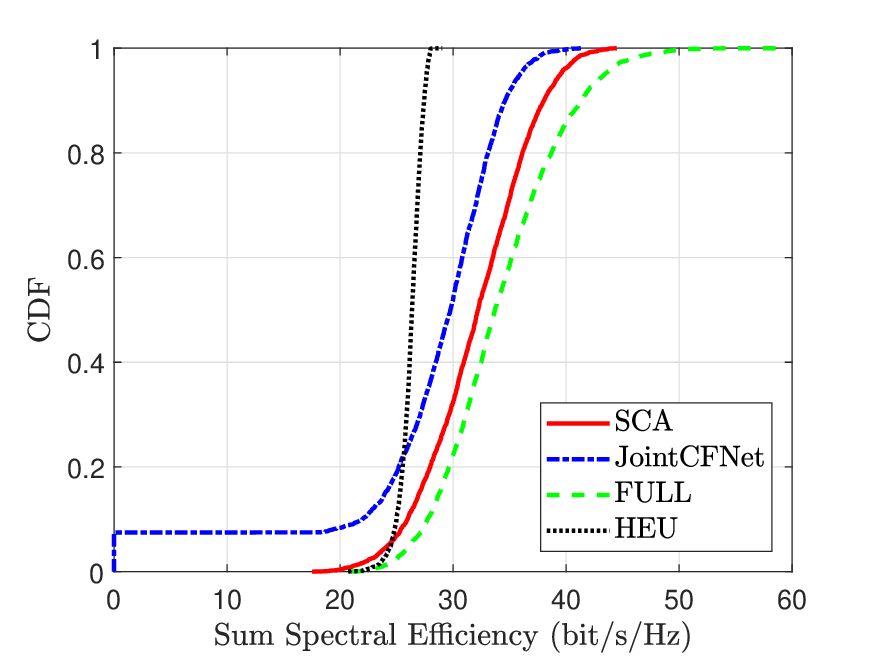}}
 \vspace{-5mm}
 \caption{CDF of the sum SE in a small-scale CFmMIMO system with $M=25, K=7$, and $\widehat{K}=5$.}
 \label{Fig:SumSE_M25}
 \vspace{-0mm}
\end{figure}
\begin{figure}[t!]
 \centering
 {\includegraphics[width=0.49\textwidth]{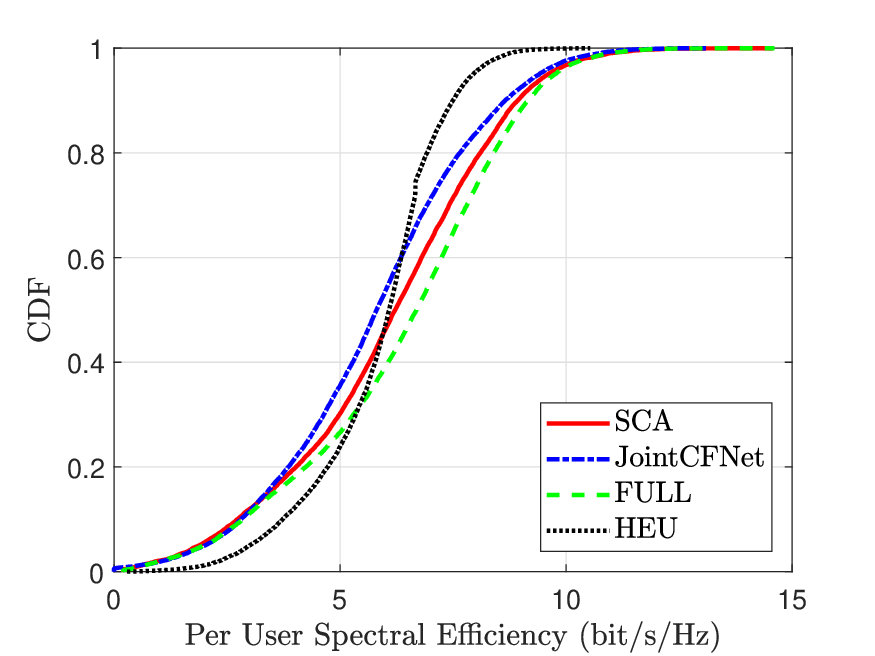}}
 \vspace{-5mm}
 \caption{CDF of the per UE SE in a small-scale CFmMIMO system with $M=36, K=5$, and $\widehat{K}=3$.}
 \label{Fig:PerSE_M36}
 \vspace{-0mm}
\end{figure}

We plot the CDF of the sum SE with fewer APs, a larger number of UEs, and more UEs served by each AP  in Fig.~\ref{Fig:SumSE_M25}. The SE obtained by the \textbf{JointCFNet} is slightly worse than that of the \textbf{SCA} solution. More specifically, in terms of the median value, the sum SE of \textbf{JointCFNet} is around $30$ bit/s/Hz, while the \textbf{SCA} scheme is around $31.5$ bit/s/Hz. To further reduce the gap between the \textbf{JointCFNet} and \textbf{SCA} schemes, we can increase the number of ResNet and Conv blocks to extract more channel features. On the other hand, 
collecting more data can help it learn more about the system's propagation environment. Importantly, the \textbf{HEU} method now presents the worst performance. This is because each AP serves $\widehat{K}$ UEs. Since 
increasing $\widehat{K}$ causes more UE interference, our proposed \textbf{SCA}, \textbf{JointCFNet}, and \textbf{APG} show their significant advantage in managing interference by optimizing PC and UE association over \textbf{HEU}.

The CDF of the per-UE SE is shown in Fig.~\ref{Fig:PerSE_M36} with $M=36$, $K=5$, and $\widehat{K}=3$. We can observe that the \textbf{JointCFNet} model achieves similar performance as the \textbf{SCA} method. 
In terms of median SE, the \textbf{HEU} has a similar performance compared with the \textbf{JointCFNet}. Meanwhile, the per-UE SE of \textbf{JointCFNet} is slightly smaller than  those of \textbf{SCA} and \textbf{FULL} (i.e., $0.5$ bit/s/Hz and $1$ bit/s/Hz). 
Figure~\ref{Fig:PerSE_M25} presents the CDF of the  per UE SE with $M=25$, $K=7$, and $\widehat{K}=5$. 
Now, \textbf{JointCFNet} has a slight performance reduction compared with Fig.~\ref{Fig:PerSE_M36} due to user interferences. Furthermore, with the advantage of managing interference, our proposed \textbf{SCA}, \textbf{JointCFNet}, and \textbf{APG} outperforms \textbf{HEU} under a larger value of $\widehat{K}$. Specifically, at the median point, \textbf{HEU} provides the per-UE SE of $3.6$ bit/s/Hz, while the per-UE SE of \textbf{JointCFNet} and \textbf{SCA}  is $4$ bit/s/Hz, $4.2$ bit/s/Hz, and that of the \textbf{FULL} is $5$ bit/s/Hz.

Note that all the schemes provide nearly identical performance in the small-scale system as shown in Figs.~\ref{Fig:SumSE_M36}-~\ref{Fig:PerSE_M25}. That is reasonable because in a small-scale system, the numbers of APs and UEs are small, and the density of the APs and UEs is small. The distances between UEs and APs are large, and hence, the UE SEs are small. The UA needs to ensure that the least favorable UEs (i.e., those that have large large-scale fading coefficients) achieve the QoS SE while sacrificing the SEs of the most favorable UEs. Therefore, the impact of UA is marginal, making the performances of all the considered schemes in the small-scale system nearly the same.

\begin{figure}[t!]
 \centering
 {\includegraphics[width=0.49\textwidth]{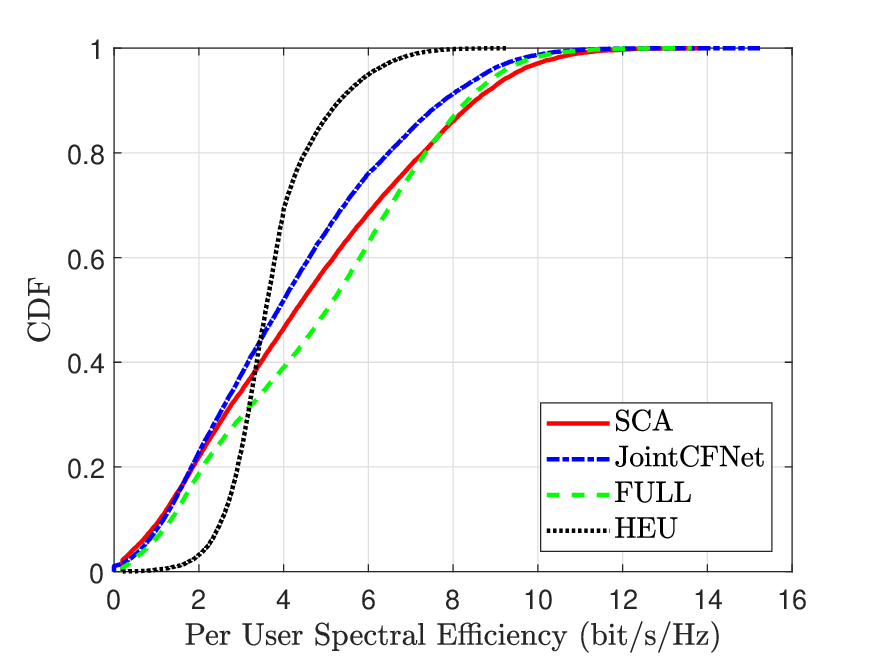}}
 \vspace{-5mm}
 \caption{CDF of the per UE SE in a small-scale CFmMIMO system with $M=25, K=7$, and $\widehat{K}=5$.}
 \label{Fig:PerSE_M25}
 \vspace{-0mm}
\end{figure}

\subsection{Performance of Large-Scale Systems}
Figure~\ref{Fig:Iteration} shows the sum SE versus the number of iterations for \textbf{SCA} (Algorithm~\ref{alg}) and \textbf{APG} (Algorithm~\ref{algAPG}) with different numbers of APs. It demonstrates that both \textbf{SCA} and \textbf{APG} methods converge. We can see that \textbf{SCA}  requires fewer iterations than \textbf{APG}. However, \textbf{SCA} uses the convex solver to obtain a solution that takes a long time during each iteration, while \textbf{APG} is computationally efficient and can be implemented with closed-form expressions. Therefore, \textbf{APG} runs significantly faster than \textbf{SCA}. The run time comparison will be displayed in Section \ref{Sec:runtime}. Moreover, \textbf{APG} offers a sum SE performance that is almost identical to that of \textbf{SCA} under a large number of APs ($M=300$). It shows that \textbf{APG} is suitable for integration in large-scale CFmMIMO systems.
\begin{figure}[t!]
 \centering
 {\includegraphics[width=0.49\textwidth]{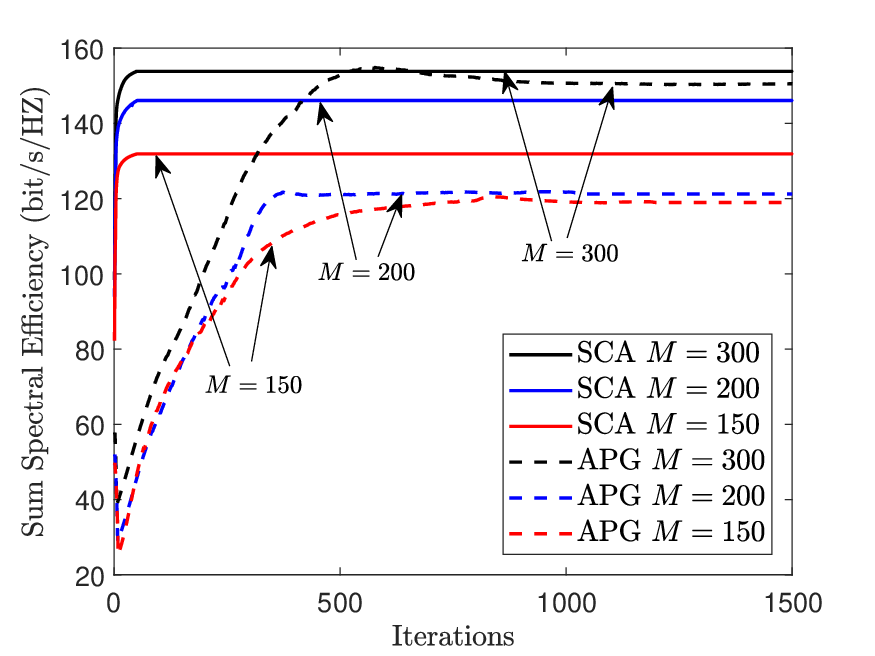}}
 \vspace{-5mm}
 \caption{The convergence of \textbf{SCA} and \textbf{APG} algorithms with different numbers of APs, $M\in \{150,200,300\}$, $K=40$, $\widehat{K}=15$.}
 \label{Fig:Iteration}
 \vspace{-0mm}
\end{figure}

Figures~\ref{Fig:SumSEM150} and~\ref{Fig:SumSEM300} display the CDF of the sum SE with different numbers of APs, when the total number of UEs is $40$, and the maximum number of UEs served by each AP is $15$. In terms of the median sum SE, \textbf{SCA} and \textbf{APG} significantly outperform \textbf{HEU}, while closely approaching \textbf{FULL}. In particular, \textbf{APG} increases the median sum SE by substantial amounts compared with that of \textbf{HEU}, e.g., by up to $219\%$ with $M=150$ and $278\%$ with $M=300$. Moreover, the sum SEs of \textbf{SCA} and \textbf{APG} are close to that of \textbf{FULL}, i.e., up to $82\%$ with $M=300$. These results show the significant advantage of joint optimization of UE association and PC to improve the SE of user-centric CFmMIMO systems. As also seen in Fig.~\ref{Fig:SumSEM150} and  Fig.~\ref{Fig:SumSEM300}, \textbf{APG} can provide a sum SE that is close to that of \textbf{SCA}. The median sum SE of \textbf{APG} can approach up to $95\%$ and $99\%$ that of \textbf{SCA} with $M=150$ and  $M=300$, respectively. Note that for given the same $K=40$ and $\widehat{K}=15$, the performance of \textbf{HEU} is similar at both $M=150$ and  $M=300$. This is because \textbf{HEU} does not manage interference as well as our proposed schemes. Increasing the total number of APs to a value far greater than $K$ and $\widehat{K}$ does not improve its SE performance. 

\begin{figure}[t!]
 \centering
 {\includegraphics[width=0.49\textwidth]{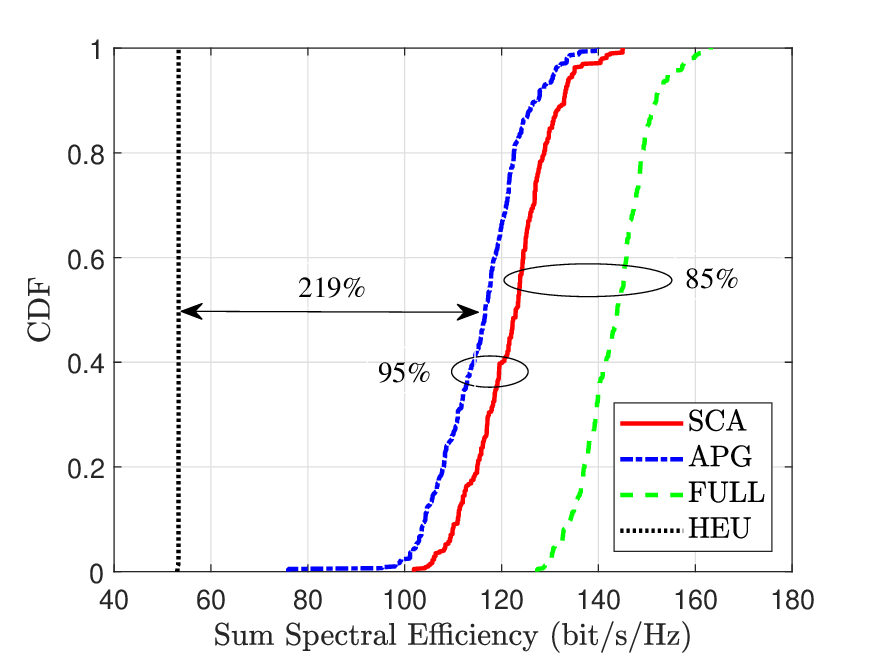}}
 \vspace{-5mm}
 \caption{CDF of the sum SE in a large-scale CFmMIMO system with $M=150$, $K=40$, and $\widehat{K}=15$.}
 \label{Fig:SumSEM150}
 \vspace{-0mm}
\end{figure}

\begin{figure}[t!]
 \centering
 {\includegraphics[width=0.49\textwidth]{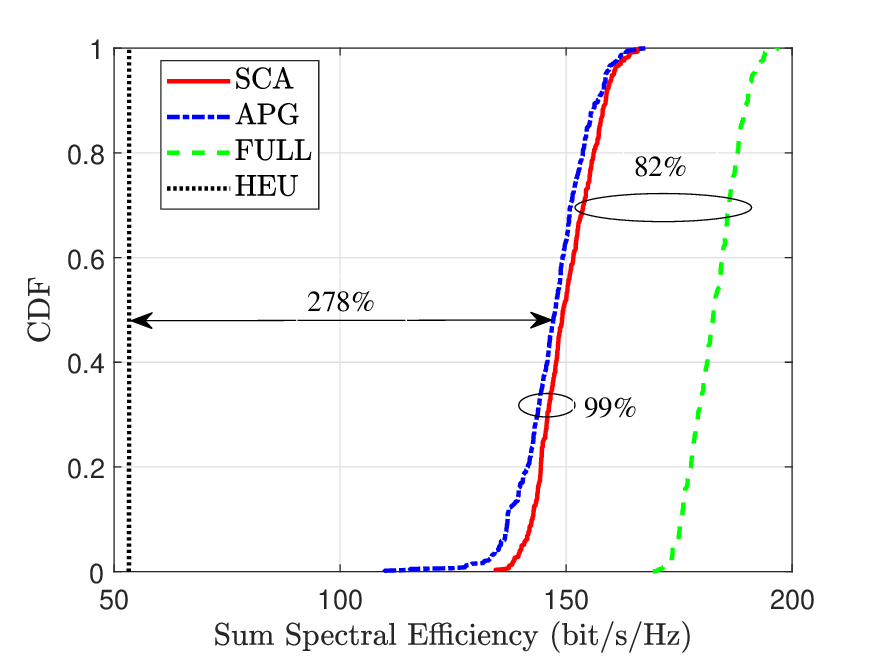}}
 \vspace{-5mm}
 \caption{CDF of the sum SE in a large-scale CFmMIMO system with $M=300$, $K=40$, and $\widehat{K}=15$.}
 \label{Fig:SumSEM300}
 \vspace{-0mm}
\end{figure}

\begin{figure}[t!]
 \centering
 {\includegraphics[width=0.49\textwidth]{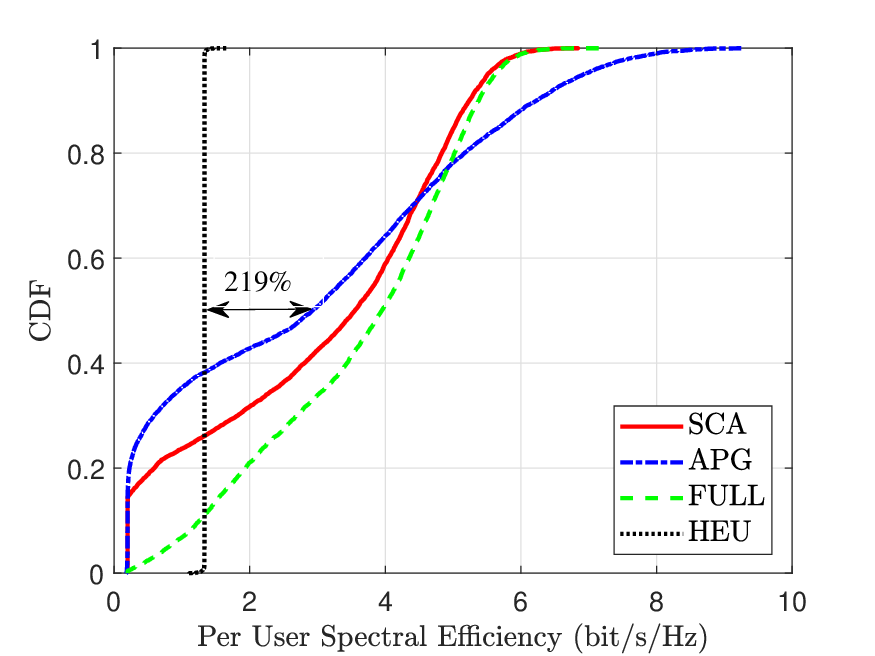}}
 \vspace{-5mm}
 \caption{CDF of the per UE SE in a large-scale CFmMIMO system with $M=150$, $K=40$, and $\widehat{K}=15$.}
 \label{Fig:PerSEM150}
 \vspace{-0mm}
\end{figure}
\begin{figure}[t!]
 \centering
 {\includegraphics[width=0.49\textwidth]{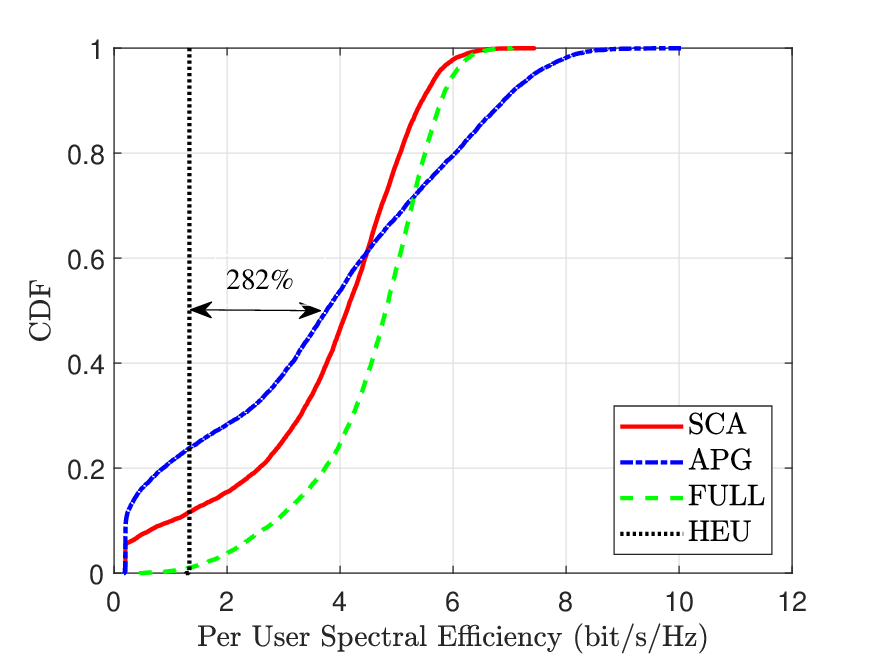}}
 \vspace{-5mm}
 \caption{CDF of the per UE SE in a large-scale CFmMIMO system with $M=300$, $K=40$, and $\widehat{K}=15$.}
 \label{Fig:PerSEM300}
 \vspace{-0mm}
\end{figure}

To gain more insights, Fig.~\ref{Fig:PerSEM150} and Fig.~\ref{Fig:PerSEM300} compare the per-UE SE of \textbf{SCA}, \textbf{APG}, \textbf{FULL}, and \textbf{HEU}.  There is a slight degradation in the median performance of \textbf{APG} compared with \textbf{SCA}, e.g., $3$ bit/s/Hz with $M=150$, and $3.9$ bit/s/Hz with $M=300$, while the \textbf{SCA} method yields $3.7$ bit/s/Hz with $M=150$ and $4.1$ bit/s/Hz with $M=300$. We can observe that the performance of \textbf{APG} is very close to the \textbf{SCA} under large numbers of APs (the performance gap is less than $0.2$ bit/s/Hz with $M=300$). Moreover, \textbf{APG} offers a median per-UE SEs significantly higher than that of \textbf{HEU}, e.g., up to $219\%$ with $M=150$ and $282\%$ with $M=300$. Similarly, increasing $M$ has no effect on the performance of \textbf{HEU} when $M$ is already very large.

Note that the performances of these schemes are significantly different in the large-scale system as shown in Figs.~\ref{Fig:SumSEM150}\mbox{-}\ref{Fig:PerSEM300}. This is because in a large-scale system, the numbers of APs and UEs are large, and the density of the APs and UEs is high. The distances between UEs and APs are small, and the UE SEs are naturally large and almost larger than the QoS SEs. The UA now focuses on improving the SE of the most favorable UEs. Therefore, the impact of UA is large, which makes the performance of all the considered schemes in the large-scale system significantly different.

\begin{table*}[!t]
\renewcommand{\arraystretch}{1.0}
\caption{
Average run time (seconds) for small-scale CFmMIMO systems
}
\label{Tab:small-scale}
\begin{center}
\vspace{-4mm}
\begin{tabular}{|c|c|c|c|c|c|}
\hline
System specifications & \textbf{SCA} & \textbf{FULL} & \textbf{HEU} & \textbf{JointCFNet} & Average run time ratio of \textbf{SCA} over \textbf{JointCFNet}\\
\hline
$M=25$, $K=7$, $\widehat{K}=5$ & $25.727$ & 16.197 & 11.252 & \textbf{0.002} & 12850\\
\hline
$M=36$, $K=5$, $\widehat{K}=3$ & 19.222 & 16.138 & 9.278  & \textbf{0.002} & 9611\\
\hline
\end{tabular}
\end{center}
\vspace{-2mm}
\end{table*}

\begin{table*}[!t]
\renewcommand{\arraystretch}{1.0}
\caption{
Average run time (seconds) for large-scale CFmMIMO systems
}
\label{Tab:large-scale}
\begin{center}
\vspace{-4mm}
\begin{tabular}{|c|c|c|c|c|c|}
\hline
System specifications & \textbf{SCA} & \textbf{FULL} & \textbf{HEU} & \textbf{APG} & Average run time ratio of \textbf{SCA} over \textbf{APG}\\
\hline
$M=150$, $K=40$, $\widehat{K}=15$ & 151.735 & 44.813 & 57.025 & \textbf{14.921} & 10\\
\hline
$M=300$, $K=40$, $\widehat{K}=15$ & 256.171 & 67.912 & 73.414  & \textbf{21.919} & 12\\
\hline
\end{tabular}
\end{center}
\vspace{-2mm}
\end{table*}

\subsection{Run Time} \label{Sec:runtime}
We compare the run time of the considered methods to evaluate the computational complexity. All algorithms are implemented with a single CPU core on the Intel (R) i$7-9800$X CPU. Table~\ref{Tab:small-scale} tabulates the average run time in a small-scale CFmMIMO system with $3000$ channel realizations. The run time of steps \eqref{theta:star} and \eqref{SEcheck} for post-processing the output obtained from the trained model is much smaller than the run time of the proposed \textbf{JointCFNet} in the online phase, and hence, is ignored in Table~\ref{Tab:small-scale}. 
For a system with $M=25$, $K=7$, and $\widehat{K}=5$, the run time is around $25.7$s using the \textbf{SCA} algorithm, while \textbf{FULL} requires $16.2$s and the \textbf{HEU} method takes $11.3$s to obtain a solution. However, the run time of the proposed \textbf{JointCFNet} is only $2$ms, around $10^{4}$ times faster than the \textbf{SCA} method. For a system with $M=36$, $K=5$, and $\widehat{K}=3$, the run time of \textbf{SCA}, \textbf{FULL}, \textbf{HEU}, and \textbf{JointCFNet} is $19.2$s, $16.1$s, $9.3$s, and $2$ms, respectively. Since the \textbf{JointCFNet} model only performs the forward propagation (addition and multiplication operations) under the trained parameters. Thus, the computational complexity is significantly reduced compared with the other considered schemes.

The average run time of a large-scale system with 400 channel realizations is given in Table~\ref{Tab:large-scale}. In a large-scale system with $M=150$, $K=40$, and $\widehat{K}=15$, the run time of \textbf{SCA} is $151.7$s, while \textbf{FULL} and \textbf{HEU} are around $44.8$s and $57$s, respectively. The proposed \textbf{APG} only requires $14.9$s to complete the joint optimization task. The average run time of \textbf{APG} is $10$ fold smaller than that of \textbf{SCA}. If we increase the number of APs, all algorithms require a longer optimization time. However, we can observe that the run time of \textbf{APG} is still $12$ times shorter than that of \textbf{SCA}. This means \textbf{APG} is significantly faster than \textbf{SCA} in large-scale systems.

\vspace{-4pt}
\section{Conclusion} \label{Sec:conclusion}
In this work, we proposed a joint optimization approach of UA and PC for CFmMIMO systems with local PPZF precoding. We formulated a mixed-integer nonconvex optimization problem to maximize the sum SE under requirements on per-AP transmit power, QoS rate, fronthaul capacity, and the maximum number of UEs each AP serves. By utilizing SCA and DL techniques, we proposed two novel schemes to solve the formulated problem in small-scale CFmMIMO systems. Then, we presented a low-complexity APG method to obtain a suboptimal solution for large-scale systems. Numerical results showed that in a small-scale CFmMIMO system, the DNN-based JointCFNet can achieve comparable performance to the SCA method while significantly reducing computational complexity. In a large-scale CFmMIMO system, the presented APG algorithm can significantly increase the SE compared with the heuristic approaches and obtain nearly the same SE as the SCA method with considerably lower complexity. These findings highlight the practicality of selecting optimization algorithms based on the system scale. Specifically, for a small-scale system, the SCA and JointCFNet can provide similar performance, but the latter is much faster than the former. On the other hand, the APG approach can provide acceptable performance with reduced run time in large-scale systems. Finally, we point out that the joint PC and UA in the uplink transmission is of importance and a timely research topic for future research.


%

\appendices
\section{}
Using the first-order Taylor series expansion at a given point ${a_{mk}^{(n)}}$, the convex upper bound of \eqref{Q} is given by
\begin{align} \label{Taylor:Q1}
\begin{split}
    \widehat{Q} (\aaa) & \triangleq \sum_{k\in\K} \sum_{m\in\MM} \Big[ a_{mk} - \left((a_{mk}^{(n)})^2 + 2a_{mk}^{(n)} (a_{mk} - a_{mk}^{(n)})\right) \Big] \\ 
    & = \sum_{k\in\K} \sum_{m\in\MM} \Big( a_{mk} - 2a_{mk}^{(n)} a_{mk} + (a_{mk}^{(n)})^2 \Big).
\end{split}
\end{align}
Similarly, $V_k(\THeta)$ in \eqref{Vhat} at the given point $\theta_{m\ell}^{(n)}$ can be expressed as
\begin{align} \label{Taylor:V1}
\begin{split}
    V_k(\THeta) &\triangleq  \sum_{\ell\in\K} \sum_{m\in\MM} \rho_d \Big[ (\theta_{m\ell}^{(n)})^2  +   2\theta_{m\ell}^{(n)} (\theta_{m\ell} -\theta_{m\ell}^{(n)})\Big] \nu + 1\\
    & = \sum_{\ell\in\K} \sum_{m\in\MM} \rho_d \Big[2\theta_{m\ell}^{(n)} \theta_{m\ell} - (\theta_{m\ell}^{(n)})^2 \Big] \nu + 1,
\end{split}
\end{align}
where $ \nu \triangleq ( \beta_{mk} - \delta_{mk} \sigma_{mk}^2)$. Plugging \eqref{Taylor:V1} into \eqref{Vhat} yields \eqref{Vhat:convex}.

Let $f({x,y}) \triangleq {(x-y)}^2$ 
be a convex function in $(x,y)$. By invoking the first-order Taylor series expansion at a given point $(x^{(k)}, y^{(k)})$, we have the convex lower bound of $f({x,y})$ as
\begin{align} \label{TaylorLowerBound:Derive:1}
   2(x^{(k)} - y^{(k)})(x-y) - {(x^{(k)} - y^{(k)})^2} \leq {(x-y)^2},
\end{align} 
which together with $4xy = \left(x+y\right)^{2} - \left(x-y\right)^{2}$ yields  
\begin{align} \label{TaylorLowerBound:Derive:3}
\begin{split}
    4xy \leq {(x+y)^2} - 2(x^{(k)} - y^{(k)})(x-y)  + {(x^{(k)} - y^{(k)})^2}.
\end{split}
\end{align}
Lastly, substituting \eqref{TaylorLowerBound:Derive:3} into the first-order Taylor series expansion at the point $({a_{mk}^{(n)},t_{k}^{(n)}})$  of \eqref{fronthaul:cons:2} gives \eqref{QoS:cons:2:convex}.

\section{}
The value of $\nabla f(\vv)$ is given as
\begin{align} \label{diff:f}
\begin{split}
\nabla f(\vv) & \!=\! \Bigg[\left(\frac{\partial}{\partial \THeta} f(\vv)\right)^\text{T}, \left(\frac{\partial}{\partial \z} f(\vv)\right)^\text{T}\Bigg]^\text{T} \\
& \!=\! \Bigg[ \left(\frac{\partial}{\partial \THeta_{1}} f(\vv)\right)^\text{T},\dots, \left(\frac{\partial}{\partial \z_{M}} f(\vv)\right)^\text{T} \Bigg]^\text{T},
\end{split}
\end{align}
where
\begin{align} \label{diff:theta:m}
\frac{\partial}{\partial \THeta_{m}} f(\vv) = \Bigg[ \left(\frac{\partial}{\partial \theta_{m1}} f(\vv)\right),\dots, \left(\frac{\partial}{\partial \theta_{mK}} f(\vv)\right) \Bigg]^\text{T},
\end{align}
\begin{align} \label{diff:z:m}
\frac{\partial}{\partial \z_{m}} f(\vv) = \Bigg[ \left(\frac{\partial}{\partial z_{m1}} f(\vv)\right),\dots, \left(\frac{\partial}{\partial z_{mK}} f(\vv)\right) \Bigg]^\text{T}.
\end{align}
Substituting $f(\vv)$ into \eqref{diff:theta:m} and \eqref{diff:z:m} we can obtain
\begin{align} \label{diff:theta:f}
    \frac{\partial}{\partial \theta_{mk}} f(\vv) = - \sum_{i\in\K}  \frac{\partial}{\partial \theta_{mk}} \SE_i (\vv) + \chi \frac{\partial}{\partial \theta_{mk}} \widetilde{Q} (\vv),
\end{align}

\begin{align} \label{diff:z:f}
    \frac{\partial}{\partial z_{mk}} f(\vv) = - \sum_{i\in\K}  \frac{\partial}{\partial z_{mk}} \SE_i (\vv) + \chi \frac{\partial}{\partial z_{mk}}  \widetilde{Q} (\vv).
\end{align}
From \eqref{DownlinkSE}, $\SE_i (\vv)$, $\forall i\in\K$, can be rewritten as 
\begin{align}
    \!\!\!\SE_i (\vv) 
    = \!\frac{\tau_c-\tau_p}{\tau_c} \Big[ \log_2 \left(U_i(\vv) + V_i(\vv)\right) - \log_2 V_i(\vv) \Big].
\end{align}
Then, we have
\begin{align} \label{diff:theta:SE}
    & \!\!\!\!\!\frac{\partial}{\partial \theta_{mk}}\! \SE_i (\vv) \!=\! \frac{\tau_c \!-\! \tau_p}{\tau_c\log 2}\! \Bigg[\! \frac{\frac{\partial}{\partial \theta_{mk}}\left(U_i(\vv) + V_i(\vv)\right)}{U_i(\vv) + V_i(\vv)} \!-\! \frac{\frac{\partial}{\partial \theta_{mk}} V_i(\vv)}{V_i(\vv)} \Bigg], 
\end{align}

\begin{align} \label{diff:z:SE}
    & \!\!\!\!\! \frac{\partial}{\partial z_{mk}}\! \SE_i (\vv) \!=\! \frac{\tau_c \!-\! \tau_p}{\tau_c\log 2}\! \Bigg[\! \frac{\frac{\partial}{\partial z_{mk}} \left(U_i(\vv) + V_i(\vv)\right)}{U_i(\vv) + V_i(\vv)} \!-\! \frac{\frac{\partial}{\partial z_{mk}} V_i(\vv)}{V_i(\vv)} \!\Bigg].
\end{align}
Note that $\frac{\partial}{\partial z_{mk}} U_i(\vv) = 0, \frac{\partial}{\partial z_{mk}} V_i(\vv) = 0, \forall m,k,i$. Plugging \eqref{diff:theta:SE} and \eqref{diff:z:SE} into \eqref{diff:theta:f} and \eqref{diff:z:f} gives \eqref{diff:theta:f1} and \eqref{diff:z:f1}.

Further, we can decompose $\frac{\partial}{\partial \theta_{mk}} U_i(\vv)$ as two parts:
\subsubsection{$i = k$}
\begin{align} \label{diff:theta:u2}
\nonumber
\frac{\partial}{\partial \theta_{mk}} U_i(\vv) & = \frac{\partial}{\partial \theta_{mk}} \Bigg( \sum_{m\in\MM} \sqrt{ \rho_d  (N-|\SSS_m|)  \sigma_{mk}^2 }  \theta_{mk}  \Bigg)^{2} 
\\
\nonumber
& = 2 \Bigg( \sum_{m\in\MM} \sqrt{ \rho_d  (N-|\SSS_m|)  \sigma_{mk}^2 }  \theta_{mk}  \Bigg) \times 
\\
\nonumber
& \quad \quad \quad \frac{\partial}{\partial \theta_{mk}} \Bigg( \sum_{m\in\MM} \sqrt{ \rho_d  (N-|\SSS_m|)  \sigma_{mk}^2 }  \theta_{mk}  \Bigg) 
\\
\nonumber
& = 2 \Bigg( \sum_{m\in\MM} \sqrt{ \rho_d  (N-|\SSS_m|)  \sigma_{mk}^2 }  \theta_{mk}  \Bigg) \times 
\\
& \quad \quad \quad \sqrt{ \rho_d  (N-|\SSS_m|)  \sigma_{mk}^2},
\end{align}
\subsubsection{$i \neq k$}
\begin{align} \label{diff:theta:u3}
\frac{\partial}{\partial \theta_{mk}} U_i(\vv) & = 0. 
\end{align}
Combing \eqref{diff:theta:u2} and \eqref{diff:theta:u3}, we obtain \eqref{diff:Ui}. The procedure to obtain \eqref{diff:Vi}--\eqref{diff:z:Q} is done in the same manner.

\bibliographystyle{IEEEtran}
\bibliography{IEEEabrv}
\end{document}